\algrenewcommand\algorithmicindent{1.0em}%
\newcommand*\samethanks[1][\value{footnote}]{\footnotemark[#1]}
\declaretheoremstyle[bodyfont=\it,qed=$\qed$]{noproofstyle}
\declaretheorem[numberlike=equation]{observation}
\declaretheorem[numberlike=equation]{subclaim}
\declaretheorem[numberlike=equation]{theorem}
\declaretheorem[numberlike=equation]{lemma}
\declaretheorem[numberlike=equation]{corollary}
\declaretheorem[numberlike=equation]{claim}
\declaretheorem[numberlike=equation]{question}
\declaretheorem[unnumbered,name=Theorem,style=noproofstyle]{theorem*}
\declaretheorem[unnumbered,name=Lemma,style=noproofstyle]{lemma*}
\declaretheorem[unnumbered,name=Corollary,style=noproofstyle]{corollary*}
\declaretheorem[unnumbered,name=Proposition,style=noproofstyle]{proposition*}
\declaretheorem[unnumbered,name=Claim]{claim*}
\declaretheorem[unnumbered,name=Conjecture]{conjecture*}
\declaretheorem[unnumbered,name=Question]{question*}
\declaretheoremstyle[bodyfont=\it]{defstyle} 
\declaretheorem[numberlike=equation,style=defstyle]{definition}
\declaretheorem[unnumbered,name=Definition,style=defstyle]{definition*}
\declaretheorem[unnumbered,name=Example,style=defstyle]{example*}
\declaretheorem[unnumbered,name=Notation=defstyle]{notation*}
\declaretheorem[unnumbered,name=Construction,style=defstyle]{construction*}
\declaretheorem[unnumbered,name=Remark,style=defstyle]{remark*}
\newenvironment{myproof}[1]%
{\vspace{1ex}\noindent{\emph{Proof.}}\hspace{0.5em}\def\myproof@name{#1}\newif\ifqedhere\qedherefalse}%
{\ifqedhere \else \hfill{\tiny \qed\ (\myproof@name)}\vspace{1ex} \qedherefalse \fi}
\newcommand{\myqedhere}
{\global\qedheretrue \tag*{\text{\tiny \qed\ (\myproof@name)}}}
\newenvironment{proof-sketch}{\medskip\noindent{\em Sketch of Proof.}\hspace*{1em}}{\qed\bigskip}
\newenvironment{proof-attempt}{\medskip\noindent{\em Proof attempt.}\hspace*{1em}}{\bigskip}
\newcommand{\inparen }[1]{\left(#1\right)}             
\newcommand{\inbrace }[1]{\left\{#1\right\}}           
\newcommand{\eqdef}{\stackrel{\mathrm{def}}{=}}
\newcommand{\setdef}[2]{\inbrace{{#1}\ \mid \ {#2}}}      
\newcommand{\F}{\mathbb{F}}
\newcommand{\Monset}{\mathbb{M}}
\newcommand{\poly}{\mathrm{poly}}
\newcommand{\veca}{\mathbf{a}}
\newcommand{\vecb}{\mathbf{b}}
\newcommand{\vecc}{\mathbf{c}}
\newcommand{\vece}{\mathbf{e}}
\newcommand{\vecu}{\mathbf{u}}
\newcommand{\vecv}{\mathbf{v}}
\newcommand{\vecx}{\mathbf{x}}
\newcommand{\vecy}{\mathbf{y}}
\newcommand{\spaced}[1]{\quad #1 \quad}
\newcommand{\sspaced}[1]{\; #1 \;} 
\renewcommand{\epsilon}{\varepsilon}
\newcommand\independent{\protect\mathpalette{\protect\independent}{\perp}} 
\def\independent#1#2{\mathrel{\rlap{$#1#2$}\mkern2mu{#1#2}}}
\newcommand{\e}{\varepsilon}
\newcommand{\1}{\mathbbm{1}}
\theoremstyle{plain}
\newtheorem{defin}[theorem]{Definition}
\theoremstyle{plain}
\theoremstyle{plain}
\theoremstyle{plain}
\theoremstyle{discussion}
\theoremstyle{plain}
\theoremstyle{plain}
\newcommand{\shortECCC}[2]{\texttt{\href{http://eccc.hpi-web.de/report/\ifnumcomp{#1}{>}{93}{19}{20}#1/#2/}{eccc:TR#1-#2}}}
\newcommand{\parseECCC}[1]{
\StrSubstitute{#1}{TR}{}[\tmpstring]%
\IfSubStr{\tmpstring}{/}{ 
\StrBefore{\tmpstring}{/}[\ecccyear]%
\StrBehind{\tmpstring}{/}[\ecccreport]%
}{
\StrBefore{\tmpstring}{-}[\ecccyear]%
\StrBehind{\tmpstring}{-}[\ecccreport]%
}%
\shortECCC{\ecccyear}{\ecccreport}}
\begin{document}
\title{Efficiently decoding Reed-Muller codes from random errors}
\author{Ramprasad Saptharishi\thanks{Department of Computer Science, Tel Aviv University, Tel Aviv, Israel, E-mails: \texttt{ramprasad@cmi.ac.in, benleevolk@gmail.com}. The research leading to these results has received funding from the European Community's Seventh Framework Programme (FP7/2007-2013) under grant agreement number 257575.}%
\and%
Amir Shpilka\thanks{Department of Computer Science, Tel Aviv University, Tel Aviv, Israel,
\texttt{shpilka@post.tau.ac.il}.  The research leading to these results has received funding
from the European Community's Seventh Framework Programme (FP7/2007-2013) under grant agreement number 257575, and from the Israel Science Foundation (grant number 339/10).}
\and%
Ben Lee Volk\samethanks[1]
}
\date{}
\maketitle

\begin{abstract}
Reed-Muller codes encode an $m$-variate polynomial of degree $r$ by evaluating it on all points in $\{0,1\}^m$.
We denote this code by $RM(m,r)$. The minimal distance of $RM(m,r)$ is $2^{m-r}$ and so it cannot correct more than half that number of errors in the worst case. For random errors one may hope for a better result.

In this work we give an efficient algorithm (in the block length $n=2^m$) for decoding random errors in Reed-Muller codes far beyond the minimal distance.
Specifically, for low rate codes (of degree $r=o(\sqrt{m})$) we can correct a random set of $(1/2-o(1))n$ errors with high probability.
For high rate codes (of degree $m-r$ for $r=o(\sqrt{m/\log m})$), we can correct roughly $m^{r/2}$ errors.

More generally, for any integer $r$, our algorithm can correct any error pattern in $RM(m,m-(2r+2))$ for which the same erasure pattern can be corrected in $RM(m,m-(r+1))$.
The results above are obtained by applying recent results of Abbe, Shpilka and Wigderson (STOC, 2015), Kumar and Pfister (2015) and Kudekar et al.\ (2015) regarding the ability of Reed-Muller codes to correct random erasures.
 
The algorithm is based on solving a carefully defined set of linear equations and thus it is significantly different than other algorithms for decoding Reed-Muller codes that are based on the recursive structure of the code.
It can be seen as a more explicit proof of a result of Abbe et al.\ that shows a reduction from correcting erasures to correcting errors, and it also bares some similarities with the famous Berlekamp-Welch algorithm for decoding Reed-Solomon codes.

\end{abstract}
\thispagestyle{empty}
\newpage

\pagenumbering{arabic}

\section{Introduction}

Consider the following challenge:
\begin{quote}
  Given the truth table of a polynomial $f(\vecx) \in \F_2[x_1,\dots, x_m]$ of degree at most $r$, in which $1/2-o(1)$ fraction of the locations were flipped (that is, given the evaluations of $f$ over $\F_2^m$ with nearly half the entries corrupted), recover $f$ efficiently.
\end{quote}

\noindent 
If the errors are adversarial, then clearly this task is impossible for any degree bound $r \ge 2$, since there are two different quadratic polynomials that disagree on only $1/4$ fraction of the domain.
Hence, we turn to considering {\em random} sets of errors of size $(1/2-o(1))2^m$, and we hope to recover $f$ with high probability (in this case, one may also consider the setting where each bit is independently flipped with probability $1/2-o(1)$.
By standard Chernoff bounds, both settings are almost equivalent).

Even in the random model, if every bit was flipped with probability exactly $1/2$, the situation is again hopeless: in this case the input is completely random and carries no information whatsoever about the original polynomial.

It turns out, however, that even a very small relaxation leads to a dramatic improvement in our ability to recover the hidden polynomial: in this paper we prove, among other results, that even at corruption rate $1/2-o(1)$ and degree bound as large as $o(\sqrt{m})$, we can {\em efficiently} recover the {\em unique} polynomial $f$ whose evaluations were corrupted.
Note that in the worst case, given a polynomial of such a high degree, an adversary can flip a tiny fraction of the bits --- just slightly more than $1/2^{\sqrt{m}}$ --- and prevent unique recovery of $f$, even if we do not require an efficient solution; and yet, in the average case, we can deal with flipping almost half the bits.

Recasting the playful scenario above in a more traditional terminology, this paper deals with similar questions related to recovery of low-degree multivariate polynomials from their \emph{randomly} corrupted evaluations on $\F_2^m$, or in the language of coding theory, we study the problem of decoding \emph{Reed-Muller} codes under random errors in the \emph{binary symmetric channel (BSC)}.
We turn to some background and motivation.

\subsection{Reed-Muller Codes}

Reed-Muller (RM) codes were introduced in 1954, first by Muller \cite{muller} and shortly after by Reed \cite{reed} who also provided a decoding algorithm.
They are among the oldest and simplest codes to construct --- the codewords are multivariate polynomials of a given degree, and the encoding function is just their evaluation vectors.
In this work we mainly focus on the most basic case where the underlying field is $\F = \F_2$, the field of two elements, although our techniques do generalize to larger finite fields.
Over $\F_2$, the Reed-Muller code of degree $r$ in $m$ variables, denoted by $RM(m,r)$, has block length $n=2^m$, rate $\binom{m}{\le r}/2^m$ and its minimal distance is $2^{m-r}$.

RM codes have been extensively studied with respect to decoding errors in both the worst case and random setting.
We begin by giving a review of Reed-Muller codes and their use in theoretical computer science and then discuss our results.

\subsubsection*{Background}

Error-correcting codes (over both large and small finite fields) have been extremely influential in the theory of computation, playing a central role in some important developments in several areas such as cryptography (e.g.\ \cite{Shamir79} and \cite{BF90}), theory of pseudorandomness (e.g.\ \cite{bogdanov-viola}), probabilistic proof systems (e.g.\ \cite{BFL91,Sha92} and \cite{ALMSS98}) and many more.

An important aspect of error correcting codes that received a lot of attention is designing efficient decoding algorithms.
The objective is to come up with an algorithm that can correct a certain amounts of errors in a received word.
There are two settings in which this problem is studied:

\medskip

{\bf Worst case errors:} This is also referred to as errors in the \emph{Hamming model}~\cite{hamming50}.
Here, the algorithm should recover the original message regardless of the error pattern, as long as there are not too many errors.
The number of errors such a decoding algorithm can tolerate is upper bounded in terms of the distance of the code.
The {\em distance} of the code $C$ is the minimum Hamming distance of any two codewords in $C$.
If the distance is $d$, then one can {\em uniquely} recover from at most $d-1$ erasures and from $\lfloor (d-1)/2 \rfloor$ errors.
For this model of worst-case errors it is easy to prove that Reed-Muller codes perform badly.
They have relatively small distance compared to what random codes of the same rate can achieve (and also compared to explicit families of codes).

Another line of work in Hamming's worst case setting concerns designing algorithms that can correct beyond the unique-decoding bound.
Here there is no unique answer and so the algorithm returns a list of candidate codewords.
In this case the number of errors that the algorithm can tolerate is a parameter of the distance of the code.
This question received a lot of attention and among the works in this area we mention the seminal works of Goldreich and Levin on Hadamard Codes \cite{GoldreichLevin89} and of Sudan \cite{Sudan97} and Guruswami and Sudan \cite{GuruswamiSudan99} on list decoding Reed-Solomon codes.
Recently, the list-decoding question for Reed-Muller codes was studied by Gopalan, Klivans and Zuckerman \cite{GopalanKZ08} and by Bhowmick and Lovett \cite{BhowmickL14}, who proved that the list decoding radius\footnote{The maximum distance $\eta$ for which the number of code words within distance $\eta$ is only polynomially large (in $n$).}
of Reed-Muller codes, over $\F_2$, is at least twice the minimum distance (recall that the unique decoding radius is half that quantity) and is smaller than four times the minimal distance, when the degree of the code is constant.

\medskip

{\bf Random errors:} A different setting in which decoding algorithms are studied is Shannon's model of random errors \cite{shannon48}.
In Shannon's average-case setting (which we study here), a codeword is subjected to a random corruption, from which recovery should be possible {\em with high probability}.
This random corruption model is called a {\em channel}.
The two most basic ones, the Binary Erasure Channel (BEC) and the Binary Symmetric Channel (BSC), have a parameter $p$ (which may depend on $n$), and corrupt a message by independently replacing, with probability $p$, the symbol in each coordinate, with a ``lost'' symbol in the BEC($p$) channel, and with the complementary symbol in the BSC($p$) case.
In his paper Shannon studied the optimal trade-off achievable for these channels (and many other channels) between the distance and rate.
For {\em every} $p$, the capacity of BEC($p$) is $1-p$, and the capacity of BSC($p$) is $1-h(p)$, where $h$ is the binary entropy function.\footnote{$h(p) = -p\log_2(p) - (1-p)\log_2(1-p)$, for $p\in (0,1)$, and $h(0)=h(1)=0$.}
Shannon also proved that random codes achieve this optimal behavior. That is, for every $0<\epsilon$ there exist codes of rate $1-h(p)-\epsilon$ for the BSC (and rate $1-p-\epsilon$ for the BEC), that can decode from a fraction $p$ of errors (erasures) with high probability.

For our purposes, it is more convenient to assume that the codeword is subjected to a fixed number $s$ of random errors.
Note that by the Chernoff-Hoeffding bound, (see e.g., \cite{AlonSpencer}), the probability that more than $pn + \omega(\sqrt{pn})$ errors occur in BSC($p$) (or BEC($p$)) is $o(1)$, and so we can restrict ourselves to the case of a fixed number $s$ of random errors, by setting the corruption probability to be $p=s/n$.
We refer to \cite{AbbeSW15} for further discussion on this subject.

\subsubsection*{Decoding erasures to decoding errors} \label{sec:erasures}

Recently, there has been a considerable progress in our understanding of the behavior of Reed-Muller codes under random erasures.
In \cite{AbbeSW15}, Abbe, Shpilka and Wigderson showed that Reed-Muller codes achieve capacity for the BEC for both sufficiently low and sufficiently high rates.
Specifically, they showed that $RM(m,r)$ achieves capacity for the BEC for $r = o(m)$ or $r > m - o(\sqrt{m/\log m})$.
More recently, Kumar and Pfister~\cite{KumarPfister15} and Kudekar, Mondelli, \c{S}a\c{s}o\u{g}lu and Urbanke \cite{KudekarMSU15} independently showed that Reed-Muller codes achieve capacity for the BEC in the entire constant rate regime, that is $r \in [m/2-O(\sqrt{m}), m/2+O(\sqrt{m})]$.
These regimes are pictorially represented in \autoref{fig:rm-bec-regime}.

\begin{figure}[h]
\begin{center}
\begin{tikzpicture}[framed]
\draw[thick] (0,0) -- (10,0);
\draw (0,-0.25) -- (0,0.25);
\draw (10,-0.25) -- (10,0.25);
\draw (5,-0.25) -- (5,0.25);
\node at (5,0.5) {$m/2$};
\node at (0,0.5) {$0$};
\node at (10,0.5) {$m$};
\draw[ultra thick,draw=red] (4,0) -- (6,0);
\draw[ultra thick,draw=red] (0,0) -- (2,0);
\draw[ultra thick,draw=red] (8.5,0) -- (10,0);
\draw [decorate,decoration={brace,amplitude=10pt,mirror},yshift=-3pt]
(0,0) -- (2,0) node [black,midway,yshift=-0.8cm]
{\footnotesize $o(m)$};

\draw [decorate,decoration={brace,amplitude=10pt,mirror},yshift=-3pt]
(8.5,0) -- (10,0) node [black,midway,yshift=-0.8cm]
{\footnotesize $o(\sqrt{(m/\log m)})$};
\draw [decorate,decoration={brace,amplitude=10pt,mirror},yshift=-3pt]
(4,0) -- (6,0) node [black,midway,yshift=-0.8cm]
{\footnotesize $O(\sqrt{m})$};
\end{tikzpicture}
\end{center}
\caption{Regime of $r$ for which $RM(m,r)$ is known to achieve capacity for the BEC}
\label{fig:rm-bec-regime}
\end{figure}
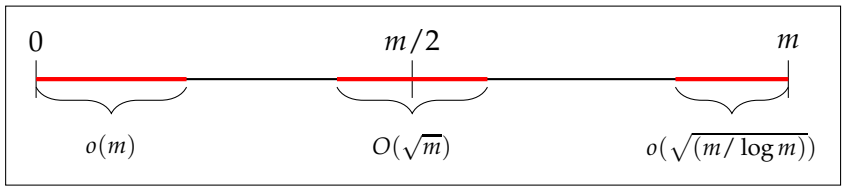

Another result proved by Abbe et al.\ \cite{AbbeSW15} is that Reed-Muller codes $RM(m,m-2r-2)$ can correct any \emph{error pattern} if the same \emph{erasure pattern} can be decoded in $RM(m,m-r-1)$.
This reduction is appealing on its own, since it connects decoding from erasures --- which is easier in both an intuitive and an algorithmic manner --- with decoding from errors; but its importance is further emphasized by the progress made later by Kumar and Pfister and Kudekar et al., who showed that Reed-Muller codes can correct many erasures in the constant rate regime, right up to the channel capacity.

This result show that $RM(m,m-(2r+2))$ can cope with most error patterns of weight $(1-o(1))\binom{m}{\le r}$, which is the capacity of $RM(m,m-(r+1))$ for the BEC.
While this is polynomially smaller than what can be achieved in the Shannon model of errors for random codes of the same rate, this number is still much larger (super-polynomial) than the distance (and the list-decoding radius) of the code, which is $2^{2r+2}$.
Also, since $RM\inparen{m, \frac{m}{2} - o(\sqrt{m})}$ can cope with $\inparen{\frac{1}{2} - o(1)}$-fraction of erasures, this translation implies that $RM(m,o(\sqrt{m}))$ can handle that many random errors.

However, a shortcoming of the proof of Abbe et al.\ for the BSC is that it is existential.
In particular it does not provide an efficient decoding algorithm.
Thus, Abbe et al.\ left open the question of coming up with a decoding algorithm for Reed-Muller codes from random errors.

\subsection{Our contributions}\label{sec:our-results}

In this work we give an efficient decoding algorithm for Reed-Muller codes that matches the parameters given by Abbe et al.
Following the aforementioned results about the erasure correcting ability of Reed-Muller codes, the results can be partitioned into the low-rate and the high-rate regimes.
We begin with the result for the low rate case.

\begin{theorem}[Low rate, informal]\label{thm:main-low-degree}
  Let $r < \delta \sqrt{m}$ for a small enough $\delta$.
Then, there is an efficient algorithm that can decode $RM(m,r)$ from a random set of $(1 - o(1)) \cdot \binom{m}{\leq m/2 - r}$ errors.
In particular, if $r = o(\sqrt{m})$, the algorithm can decode from $\inparen{\frac{1}{2} - o(1)} \cdot 2^m$ errors.
The running time of the algorithm is $O(n^4)$ and it can be simulated in $\mathsf{NC}$.
\end{theorem}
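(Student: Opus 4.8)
The plan is to give an algorithmic version of the error-to-erasure reduction of Abbe--Shpilka--Wigderson, in the spirit of the Berlekamp--Welch decoder for Reed--Solomon codes. Write the degree as $r = m-2t-2$ (so $t = \lceil (m-r)/2 \rceil - 1 \approx m/2$), and let $y = c + \mathbbm{1}_B$ be the received word, where $c \in RM(m,r)$ and $B$ is the random set of error locations. Since $t \approx m/2$ while $r < \delta\sqrt{m}$, the auxiliary code $RM(m,m-t-1) = RM(m,\lceil(m+r)/2\rceil)$ lies in the constant-rate window $[m/2-O(\sqrt m),\, m/2+O(\sqrt m)]$, so by the results of Kumar--Pfister and Kudekar--Mondelli--\c{S}a\c{s}o\u{g}lu--Urbanke it achieves capacity for the BEC. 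Hence, with probability $1-o(1)$ over the random $B$ (whose size, up to $(1-o(1))\binom{m}{\le m/2-r}$, stays below the co-dimension $\binom{m}{\le t}$ of that code), $B$ is an erasure pattern that $RM(m,m-t-1)$ corrects; equivalently, the evaluation map $\mathrm{eval}_B\colon RM(m,t) \to \F_2^B$ is surjective. We condition on this event, together with a slightly stronger one described below. In particular, when $r=o(\sqrt m)$ this already covers $(1/2-o(1))\cdot 2^m$ errors.

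The algorithm then solves one linear system. Let $\mathcal{E} = \{\,E \in RM(m,t+1) : E\cdot y \in RM(m,m-t-1)\,\}$; the requirement "$Ey$ has degree $\le m-t-1$'' is a collection of linear constraints on the coefficients of $E$, so $\mathcal E$ is a linear subspace whose basis we compute by Gaussian elimination. The crux is to identify $\mathcal E$ exactly: I claim $\mathcal E = \ker\bigl(\mathrm{eval}_B\big|_{RM(m,t+1)}\bigr)$, i.e.\ $\mathcal E$ is precisely the space of degree-$\le t+1$ polynomials vanishing on $B$. For ``$\supseteq$'': if $E$ vanishes on $B$ then $Ey$ agrees with $Ec$ at every point (on $B$ both sides vanish; off $B$ we have $y=c$), and $\deg(Ec) \le (t+1)+(m-2t-2)=m-t-1$, so $E\in\mathcal E$. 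For ``$\subseteq$'': if $E\in\mathcal E$ then $E\cdot\mathbbm{1}_B = E(y+c) = Ey+Ec$ has degree $\le m-t-1$ and support contained in $B$; since $RM(m,m-t-1)$ corrects the erasure pattern $B$, its only codeword supported on $B$ is $0$, so $E$ vanishes on $B$. Finally, recover $B$ as the common zero set of $\mathcal E$ and output $c = y + \mathbbm{1}_B$ (over $\F_2$ the error values are forced, so this is unambiguous; alternatively one may erasure-decode $y$ with the coordinates of $B$ erased, which works since $RM(m,r)\subseteq RM(m,m-t-1)$). The output has degree $\le r$ and agrees with $y$ off $B$, hence equals $c$.

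The step I expect to be the main obstacle is the last one: showing that the common zero set of $\mathcal E$ is exactly $B$ and not a proper superset. A single error locator $E\in\mathcal E$ of degree $\le t+1$ typically has a huge zero set (at least half of $\F_2^m$ as soon as $E$ has a linear factor), so $B$ cannot be read off from any one locator; the point is that $\mathcal E$ is rich enough that the zero sets of its members intersect in exactly $B$. Concretely, for each $\alpha\notin B$ I need a degree-$\le t+1$ polynomial vanishing on $B$ but not at $\alpha$, which is exactly surjectivity of $\mathrm{eval}_{B\cup\{\alpha\}}\colon RM(m,t+1)\to\F_2^{B\cup\{\alpha\}}$, i.e.\ the statement that $RM(m,m-t-2)$ corrects the erasure pattern $B\cup\{\alpha\}$. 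Since $RM(m,m-t-2)$ is again a constant-rate code it too achieves BEC capacity, and its co-dimension $\binom{m}{\le t+1}$ is larger than that of $RM(m,m-t-1)$, leaving room for the extra erased coordinate; as $B\cup\{\alpha\}$ is still essentially a random set, the event holds with failure probability small enough to survive a union bound over the $n$ choices of $\alpha$, provided the $o(1)$ slack in the error count is taken slightly larger than the $o(1)$ loss from the capacity theorems. Getting these parameters — the gap between consecutive co-dimensions, the tail bounds from the capacity results, and the target error count — to line up is the delicate part. Everything else is routine: the running time is dominated by Gaussian elimination on an $O(n)\times O(n)$ system together with $O(n)$ polynomial evaluations, giving the claimed $O(n^4)$ bound, and since the whole procedure is linear algebra it can be carried out in $\mathsf{NC}$.
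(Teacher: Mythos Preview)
Your overall scheme---solve for error-locators $E\in RM(m,t+1)$ with $E\cdot y\in RM(m,m-t-1)$ and read off $B$ as the common zero set---is exactly the paper's abstract Berlekamp--Welch decoder (their \autoref{alg:abstract} with $E=RM(m,t+1)$, $C=RM(m,r)$, $N=RM(m,m-t-1)$), and your identification $\mathcal E=\ker(\mathrm{eval}_B)$ is correct and matches their argument. The divergence, and the real gap, is in the step you yourself flag as delicate: showing the common zero set of $\mathcal E$ is no larger than $B$. You propose to get, for each $\alpha\notin B$, a degree-$(t{+}1)$ polynomial vanishing on $B$ but not at $\alpha$ by invoking BEC capacity of $RM(m,m-t-2)$ on the set $B\cup\{\alpha\}$ and then union-bounding over the $n=2^m$ choices of $\alpha$. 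The problem is that the Kumar--Pfister and Kudekar--Mondelli--\c{S}a\c{s}o\u{g}lu--Urbanke results only assert success probability $1-o(1)$ with no explicit rate; there is no guarantee that the failure probability is $o(2^{-m})$, so the union bound over $2^m$ points is not justified by those theorems. This is not merely a matter of adjusting the $o(1)$ slack in the error count.

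The paper sidesteps this entirely: the needed locator at every $\alpha\notin B$ follows \emph{deterministically} from the single event you are already conditioning on, namely surjectivity of $\mathrm{eval}_B\colon RM(m,t)\to\F_2^B$. That surjectivity gives degree-$t$ ``dual'' polynomials $f_1,\dots,f_s$ with $f_i(\vecu_j)=\delta_{ij}$ on $B=\{\vecu_1,\dots,\vecu_s\}$. Now fix $\alpha\notin B$. If $f_j(\alpha)=1$ for some $j$, pick a coordinate $\ell$ with $\alpha_\ell\neq(\vecu_j)_\ell$ and set $g=f_j\cdot(x_\ell+\alpha_\ell+1)$; then $\deg g\le t+1$, $g$ vanishes on all of $B$ (on $\vecu_i$ with $i\neq j$ because $f_j$ does, and on $\vecu_j$ because the linear factor does), and $g(\alpha)=1$. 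If instead $f_j(\alpha)=0$ for all $j$, take $g=1+\sum_i f_i$, which already has degree $\le t$, vanishes on $B$, and equals $1$ at $\alpha$. Either way you get the locator you need, with no second probabilistic event and no union bound. Plugging this in, your proof goes through; everything else you wrote (the containments of $\mathcal E$, the recovery of $c$, the running time and $\mathsf{NC}$ claim) is fine.
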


For high rate Reed-Muller codes, we cannot hope to achieve such a high error correction capability as in the low rate case, even information theoretically.
We do give, however, an algorithm that corrects many more errors (a super-polynomially larger number) than what the minimal distance of the code suggests, and its running time is also nearly linear in the block length of the code.

\begin{theorem}[High rate, informal]\label{thm:main:informal}
  Let $r =o(\sqrt{m/\log m})$.
Then, there is an efficient algorithm that can decode $RM(m, m-(2r+2))$ from a random set of $(1-o(1))\binom{m}{\le r}$ errors.
Moreover, the running time of the algorithm is $2^m \cdot \poly(\binom{m}{\le r})$ and it can be simulated in $\mathsf{NC}$.
\end{theorem}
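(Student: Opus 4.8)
The plan is to cast the theorem as an explicit, algorithmic version of the erasures-to-errors reduction of Abbe, Shpilka and Wigderson, and then to feed it the known erasure guarantees for high-rate Reed--Muller codes. Write $\mathrm{ev}(f)\in\F_2^{2^m}$ for the evaluation vector of a polynomial $f$ on $\F_2^m$, and for vectors $u,v\in\F_2^{2^m}$ let $u\odot v$ be their coordinatewise product. It suffices to give an efficient algorithm for the following promise problem: given a word $y=\mathrm{ev}(f)+e$, where $f$ has degree at most $m-(2r+2)$ and $E:=\supp(e)$ is a \emph{correctable erasure pattern} for $RM(m,m-(r+1))$ --- meaning that no nonzero codeword of $RM(m,m-(r+1))$ is supported inside $E$ --- output $f$. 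Given such an algorithm, the theorem follows: since $r=o(\sqrt{m/\log m})$, the code $RM(m,m-(r+1))$ achieves the capacity of the binary erasure channel by the high-rate result of \cite{AbbeSW15}, and since its rate is $1-\binom{m}{\le r}/2^m$ this says precisely that a uniformly random set of $(1-o(1))\binom{m}{\le r}$ coordinates is, with probability $1-o(1)$, a correctable erasure pattern for $RM(m,m-(r+1))$; running the algorithm on such an error set then recovers $f$.

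The heart of the algorithm is a Berlekamp--Welch-style linear step. Consider the homogeneous $\F_2$-linear system whose unknowns are the coefficient vectors of a multilinear polynomial $g$ of degree at most $r+1$ and a multilinear polynomial $h$ of degree at most $m-(r+1)$, with one equation $g(a)\cdot y_a=h(a)$ for each $a\in\F_2^m$. Two facts drive everything. First, there is a solution with $g\ne 0$: the multilinear polynomials of degree at most $r+1$ vanishing on $E$ form a space of dimension at least $\binom{m}{\le r+1}-|E|\ge\binom{m}{\le r+1}-\binom{m}{\le r}=\binom{m}{r+1}>0$, and for any nonzero such $g$ the pair $(g,fg)$ is a solution, since $g$ vanishes on $\supp(e)$ and $\deg(fg)\le m-(r+1)$. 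Second, \emph{every} solution $(g,h)$ satisfies $\mathrm{ev}(h)=\mathrm{ev}(fg)$ and $\mathrm{ev}(g)$ vanishes on $E$: indeed $\mathrm{ev}(h)-\mathrm{ev}(fg)=\mathrm{ev}(g)\odot\bigl(y-\mathrm{ev}(f)\bigr)=\mathrm{ev}(g)\odot e$ is at once supported on $E$ and the evaluation vector of the polynomial $h-fg$, which has degree at most $m-(r+1)$; being a codeword of $RM(m,m-(r+1))$ supported inside $E$, it is zero.

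We now extract $f$. From a basis of the solution space we compute the set $Z:=\{a\in\F_2^m: g(a)=0\text{ for every solution }(g,h)\}$; the second fact above gives $E\subseteq Z$. For every $a\notin Z$, picking a solution with $g(a)=1$ forces $a\notin E$ and $f(a)=h(a)=y_a$, so we have recovered $f$ on all of $\F_2^m\setminus Z$. We then output the unique polynomial of degree at most $m-(2r+2)$ that agrees with $y$ off $Z$ --- equivalently, we solve for the error, now known to be supported inside $Z$, from the syndrome of $y$ with respect to $RM(m,2r+1)=RM(m,m-(2r+2))^{\perp}$. The correctness of this last step rests on the following lemma, which is the technical heart of the proof and the point at which the gap between degree $r+1$ and degree $2r+2$ is actually spent: \emph{if $E$ is a correctable erasure pattern for $RM(m,m-(r+1))$, then its degree-$(r+1)$ closure $Z$ is a correctable erasure pattern for $RM(m,m-(2r+2))$.} I expect this lemma --- an effective restatement of the ASW reduction, which identifies $Z$ as exactly the coordinate set over which one can still solve --- to be the main obstacle. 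The ingredients for its proof should be the factorization $RM(m,2r+1)=RM(m,r)\cdot RM(m,r+1)$ (every monomial of degree at most $2r+1$ is a product of a monomial of degree at most $r$ and a monomial of degree at most $r+1$) together with the degree-$\le r$ polynomials localizing the points of $E$ that its correctability provides: writing each $a\in Z$ via a unique relation $\mathrm{ev}_{\le r+1}(a)=\sum_{b\in B_a}\mathrm{ev}_{\le r+1}(b)$ with $B_a\subseteq E$ (here $\mathrm{ev}_{\le r+1}(\cdot)$ collects the evaluations of all monomials of degree at most $r+1$), one argues that a hypothetical nonzero polynomial $\phi$ of degree at most $m-(2r+2)$ supported on $Z$ is annihilated, as a function, by the whole space of degree-$\le(r+1)$ polynomials vanishing on $E$, and that transporting the map $a\mapsto B_a$ through the products of $\phi$ with such polynomials produces a nonzero codeword of $RM(m,m-(r+1))$ supported inside $E$, a contradiction.

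It remains to bound the running time, the point being never to write down a $2^m\times 2^m$ matrix. A pair $(g,h)$ solves the system of the second paragraph if and only if $\mathrm{ev}(g)\odot y\in RM(m,m-(r+1))$, equivalently $\mathrm{ev}(g)\odot y$ is orthogonal to $RM(m,r)$; this unfolds into $\binom{m}{\le r}$ linear equations in the $\binom{m}{\le r+1}$ coefficients of $g$ whose matrix is read off directly from the syndrome of $y$ with respect to $RM(m,2r+1)$ --- a vector of $\binom{m}{\le 2r+1}$ numbers computable by one M\"obius-type transform in $O(2^m\cdot m)$ operations. Solving this $\binom{m}{\le r}\times\binom{m}{\le r+1}$ system and evaluating a basis of its solutions everywhere (again a M\"obius transform) to obtain $Z$ costs $O\bigl(2^m\cdot m\cdot\binom{m}{\le r+1}\bigr)$, and since the lemma guarantees $|Z|\le\binom{m}{\le 2r+1}=\poly\bigl(\binom{m}{\le r}\bigr)$, the concluding syndrome-based solve for the error has size $\poly\bigl(\binom{m}{\le r}\bigr)$. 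Every step is a M\"obius transform or Gaussian elimination over $\F_2$, so the total running time is $2^m\cdot\poly\bigl(\binom{m}{\le r}\bigr)$, and as each of these parallelizes the algorithm lies in $\mathsf{NC}$.
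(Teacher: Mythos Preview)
Your Berlekamp--Welch setup is precisely the abstraction the paper itself presents (their Algorithm~2, with error-locating code $RM(m,r+1)$, target code $C=RM(m,m-(2r+2))$, and intermediate code $N=RM(m,m-(r+1))$). The two facts you prove about the solution space are correct, and your reduction of the $2^m$-equation system to a $\binom{m}{\le r}\times\binom{m}{\le r+1}$ system whose matrix is read off from the degree-$(2r+1)$ syndrome is the right way to get the running time.

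The gap is exactly where you flag it: the ``crucial lemma'' that the degree-$(r+1)$ closure $Z$ of $E$ is a correctable erasure pattern for $RM(m,m-(2r+2))$. Your sketch --- transporting $a\mapsto B_a$ through products of $\phi$ with degree-$(r+1)$ polynomials vanishing on $E$ to manufacture a nonzero codeword of $N$ supported on $E$ --- does not obviously close; it is not clear which linear combination of such products would be nonzero yet supported inside $E$. The paper sidesteps this entirely by proving the sharper fact $Z=E$: for every $v\notin E$ there is a polynomial of degree at most $r+1$ vanishing on $E$ with value $1$ at $v$. This uses only the degree-$r$ dual polynomials $f_1,\dots,f_t$ (satisfying $f_i(u_j)=\delta_{ij}$) that the linear-independence hypothesis on $\{u_i^r\}$ already provides: if $f_j(v)=1$ for some $j$, take $g=f_j\cdot(1+x_\ell+v_\ell)$ for any coordinate $\ell$ with $v_\ell\ne(u_j)_\ell$; if all $f_j(v)=0$, take $g=1+\sum_i f_i$. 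Either way $\deg g\le r+1$, $g$ vanishes on $E$, and $g(v)=1$. With $Z=E$ in hand your final step is immediate --- any codeword of $RM(m,m-(2r+2))\subseteq RM(m,m-(r+1))$ supported on $E$ is zero by hypothesis --- and the bound $|Z|\le\binom{m}{\le r}$ you need for the running time comes for free.

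For comparison, the paper's primary algorithm is the point-by-point dual of yours: for each candidate $v\in\F_2^m$ it solves a small linear system, in the coefficients of a degree-$r$ polynomial, whose solvability is shown (via essentially the three-line construction of $g$ above) to characterize membership $v\in E$. The two algorithms have the same cost and are proved correct by the same idea; your single-system-then-intersect-zeros formulation is what the paper later extracts as its abstract decoder.
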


Recall that the block length of the code is $n=2^m$, and thus the running time is near linear in $n$ when $r=o(m)$.

A general property of our algorithm is that it corrects any error pattern in $RM(m, m - 2r -2)$ for which the same {\em erasure} pattern in $RM(m,m - r-1)$ can be corrected.
Stated differently, if an erasure pattern can be corrected in $RM(m, m-r-1)$ then the same pattern, where the ``lost'' symbol is replaced with arbitrary $0/1$ values, can be corrected in $RM(m, m-(2r+2))$.
This property is useful when we know $RM(m,m - r-1)$ can correct a large set of erasures with high probability, that is, when $m-r-1$ falls in the \emph{red region} in \autoref{fig:rm-bec-regime}.
Thus, our result has implications also beyond the above two instances.
In particular, it may be the case that our algorithm performs well for other rates as well.
For example, consider the following question and the theorem it implies.

\begin{question}\label{Q:RM:BEC}
Does $RM(m, m-r-1)$ achieve capacity for the BEC?
\end{question}

\begin{theorem}[informal]\label{thm:main:conj:informal}
  For any value $r$ for which the answer to \autoref{Q:RM:BEC} is positive, there exists an efficient algorithm that decodes $RM(m, m-2r-2)$ from a random set of $(1-o(1))\binom{m}{\le r}$ errors with probability $(1-o(1))$ (over the random errors).
Moreover, the running time of the algorithm is $2^m \cdot \poly\inparen{\binom{m}{\le r}}$.
\end{theorem}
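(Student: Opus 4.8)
The plan is to deduce \autoref{thm:main:conj:informal} from a single \emph{algorithmic} statement: there is an efficient deterministic algorithm that, given $y = \mathrm{ev}(f) + e \in \F_2^{2^m}$ with $\deg f \le m-2r-2$ and with $E := \supp(e)$ an erasure pattern that $RM(m,m-r-1)$ can correct, recovers $f$. Granting this, the theorem follows immediately: by hypothesis $RM(m,m-r-1)$ achieves the BEC capacity, and for that code the capacity (measured as a number of erasures) is exactly $\binom{m}{\le r}$; hence a uniformly random set $E$ of size $(1-o(1))\binom{m}{\le r}$ is a correctable erasure pattern with probability $1-o(1)$, so the algorithm succeeds with probability $1-o(1)$ over the errors. (Along the way I also need that $E\cup\{a\}$ remains correctable for every single point $a$; since $|E|+1$ is still below capacity this again holds with probability $1-o(1)$, by a union bound over the $2^m$ choices of $a$ that uses the polynomially-small failure probability of capacity-achieving erasure decoding.) Note also that $RM(m,m-2r-2)\subseteq RM(m,m-r-1)$, so such an $E$ is automatically a correctable erasure pattern for $RM(m,m-2r-2)$ as well; this is used in the last step below.

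The algorithm is a Berlekamp--Welch-type procedure. I would search for an \emph{error-locator} polynomial $g$ of degree at most $r+1$ vanishing on $E$: as in Berlekamp--Welch, multiplying by such a $g$ annihilates the error, since $g\cdot y = \mathrm{ev}(gf) + g\cdot e = \mathrm{ev}(gf)$ is the evaluation of a polynomial of degree at most $(r+1) + (m-2r-2) = m-r-1$. The crucial point is that this makes the locator condition \emph{linear and $f$-oblivious}: ``$g\cdot y$ has degree $\le m-r-1$'' is the same as ``$g\cdot y \perp RM(m,r)$'', i.e.\ $\sum_{|U|\le r+1} g_U\,\sigma_{U\cup S} = 0$ for every $S$ with $|S|\le r$, where the $g_U$ are the unknown coefficients of $g$ and $\sigma_W = \sum_{a\,:\,W\subseteq \supp(a)} y(a)$ are generalized syndromes of $y$ (computable in $2^m\cdot\binom{m}{\le 2r+1}$ time). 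This is a homogeneous system with $\binom{m}{\le r}$ equations and $\binom{m}{\le r+1}>\binom{m}{\le r}$ unknowns, so it always has a nonzero solution; and — using that $E$ is a correctable erasure pattern for $RM(m,m-r-1)$, so that $RM(m,m-r-1)$ contains no nonzero vector supported on $E$ — one checks that the solution space is \emph{exactly} $G := \{g : \deg g\le r+1,\ g|_E = 0\}$, and that for each $g\in G$ the unique $h$ with $\mathrm{ev}(h) = g\cdot y$ satisfies $h = gf$ (apply the same ``no codeword supported on $E$'' fact to $\mathrm{ev}(gf)-h$, which has degree $\le m-r-1$).

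Having computed a basis $g_1,\dots,g_k$ of $G$ together with the vectors $\mathrm{ev}(h_i) = g_i\cdot y$, I read off $f(a) = h_i(a)$ at every point $a$ where some $g_i(a)=1$; this determines $f$ on $\F_2^m\setminus Z(G)$, with $Z(G) := \bigcap_i\{g_i = 0\}$. A dualization shows $a\in Z(G)$ iff $\mathbf{1}_a \in \mathrm{span}\{\mathbf{1}_t : t\in E\} + RM(m,m-r-2)$, and under the promise that every $E\cup\{a\}$ is correctable by $RM(m,m-r-1)$ this forces $Z(G) = E$. Since $E$ is a correctable erasure pattern for $RM(m,m-2r-2)$ and $f\in RM(m,m-2r-2)$, the remaining values of $f$ on $E$ are then recovered by ordinary erasure decoding — which here just means solving for the at most $\binom{m}{\le r}$ unknown values $\{f(a)\}_{a\in E}$ from the parity checks of $RM(m,m-2r-2)$, and is therefore cheap. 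All steps are inner products and Gaussian elimination, so the total running time is $2^m\cdot\poly\big(\binom{m}{\le r}\big)$ and the algorithm can be simulated in $\mathsf{NC}$.

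The main obstacle is the equality $Z(G) = E$, i.e.\ the assertion that the error positions are precisely the common zeros of all low-degree locators (equivalently, that adjoining any single point to $E$ keeps it a correctable erasure pattern for $RM(m,m-r-1)$). This is exactly where the hypothesis of \autoref{Q:RM:BEC} is indispensable: a random $E$ of size $(1-o(1))\binom{m}{\le r}$ sits strictly below capacity, so there is enough slack to absorb one extra adversarial point, provided the erasure-decoding failure probability of the capacity-achieving code is small enough to survive a union bound over that point (it is, in all the relevant results). In the unconditional \autoref{thm:main-low-degree} and \autoref{thm:main:informal} the corresponding interpolation fact is instead extracted from the explicit capacity theorems of Abbe, Shpilka, and Wigderson, of Kumar and Pfister, and of Kudekar et al., where the degree restrictions $r = o(\sqrt m)$, respectively $r = o(\sqrt{m/\log m})$, are precisely what make the estimate go through. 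Everything else — existence of a nonzero locator, annihilation of the error, the identity $h = gf$, and the running-time bookkeeping — is routine.
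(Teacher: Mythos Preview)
Your approach is essentially the paper's own, particularly the abstract Berlekamp--Welch formulation in \autoref{sec:abstraction}: with $E=RM(m,r+1)$, $C=RM(m,m-2r-2)$, $N=RM(m,m-r-1)$, solve $\veca * \vecy = \vecb$ for $\veca\in E$, $\vecb\in N$, and take common zeros of the solution space for $\veca$. Your identification of the solution space with $G=\{g:\deg g\le r+1,\ g|_E=0\}$ and your recovery of $f$ outside $Z(G)$ are exactly the two subclaims in the paper's proof of \autoref{thm:abstraction}.

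There is, however, a genuine gap in your justification of $Z(G)=E$. You reduce it to ``$E\cup\{a\}$ is correctable in $RM(m,m-r-1)$ for every $a$'' and then appeal to a union bound over the $2^m$ choices of $a$. But the hypothesis of \autoref{Q:RM:BEC} is only that the erasure-decoding failure probability is $o(1)$; nothing in the statement gives you the $o(2^{-m})$ per-point failure that a union bound over all $a\in\F_2^m$ would require. Your parenthetical ``it is, in all the relevant results'' is both unsubstantiated and beside the point, since \autoref{thm:main:conj:informal} is stated for \emph{any} $r$ at which capacity is achieved, not only those covered by the specific theorems of \cite{AbbeSW15,KumarPfister15,KudekarMSU15}.

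The paper avoids this entirely by proving $Z(G)=E$ \emph{deterministically} from the single assumption that $\{\vecu^r:\vecu\in E\}$ are linearly independent (no extra point, no union bound). The construction is short but not obvious: take degree-$\le r$ dual polynomials $f_i$ with $f_i(\vecu_j)=\delta_{ij}$ (which exist by the linear independence), and for $\vecv\notin E$ either use $g=1-\sum_i f_i$ (if all $f_i(\vecv)=0$) or $g=f_j\cdot(x_\ell+\vecv_\ell+1)$ for a coordinate $\ell$ where $\vecv$ and $\vecu_j$ differ (if some $f_j(\vecv)=1$). In either case $g$ has degree $\le r+1$, vanishes on $E$, and satisfies $g(\vecv)=1$. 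This is precisely what verifies \autoref{item:correction} of \autoref{thm:abstraction}, and it is the one place where the extra ``$+1$'' in the locator degree is used. Once you plug this in, your argument goes through with no probabilistic slack beyond the single event ``$E$ is correctable in $RM(m,m-r-1)$'', which is exactly what a positive answer to \autoref{Q:RM:BEC} provides.
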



Recall that Abbe et al.\ \cite{AbbeSW15} also proved that the answer to \autoref{Q:RM:BEC} is positive for $r = m- o(m)$ (that is, for $RM(m,o(m))$) but this case does not help us as we need to consider $RM(m, m-(2r+2))$ and $m - (2r + 2) < 0$ in this case.
The coding theory community seems to believe the answer to \autoref{Q:RM:BEC} is positive, for all values of $r$, and conjectures to that effect were made\footnote{The belief that RM codes achieve capacity is much older, but we did not trace back where it appears first.}
in \cite{forney-road,arikan-RM,mondelli-RM}.
Recent simulations have also suggested that the answer to the question is positive \cite{arikan-RM,mondelli-RM}.
Thus, it seems natural to believe that the answer is positive for most values of $r$, even for $r =\Theta(m)$.
As a conclusion, the belief in the coding theory community suggests that our algorithm can decode a random set of roughly $\binom{m}{\le r}$ errors in $RM(m, m-(2r+2))$.
For example, for $r=\rho\cdot m$, where $\rho<1/2$, the minimal distance of $RM(m, m-(2r+2))$ is roughly $2^{2\rho m}$ whereas our algorithm can decode from roughly $2^{h(\rho)m}$ random errors (assuming the answer to \autoref{Q:RM:BEC} is positive), which is a much larger quantity for every $\rho < 1/2$.\\

In \autoref{sec:abstraction}, we also present an abstraction of our decoding procedure that may be applicable to other linear codes.
This is a generalization of the abstract Berlekamp-Welsch decoder or ``error-locating pairs'' method of Duursma and K\"{o}tter~\cite{DuursmaK94} that connects decodable erasure patterns on a larger code to decodable error patterns.
A specific instantiation of this was observed by Abbe et al.\ \cite{AbbeSW15} by connecting decodable error patterns of any linear code $C$ to decodable erasure patterns of an appropriate ``tensor'' $C'$ of $C$ (by essentially embedding these codes in a large enough RM code).
Although Abbe et al.\ did not provide an efficient decoding algorithm, the algorithm we present directly applies here (\autoref{sec:general}).
The abstraction of the ``error-locating pairs'' method presented in \autoref{sec:abstraction} should hopefully be applicable in other contexts too, especially considering the generality of the results of \cite{KumarPfister15, KudekarMSU15}.

\subsection{Related literature} 

In \autoref{sec:erasures} we surveyed the known results regarding the ability of Reed-Muller codes to correct random erasures.
In this section we summarize the results known about recovering RM codes from random errors.

Once again, it is useful to distinguish between the low rate and the high rate regime of Reed-Muller codes.
We shall use $d$ to denote the distance of the code in context.
For $RM(m,r)$ codes, $d = 2^{m-r}$.

In \cite{krich}, the majority logic algorithm of \cite{reed} is shown to succeed in recovering all but a vanishing fraction of error patterns of weight up to $d \log d/4$ for all RM codes of positive rate.
In \cite{dumer3}, Dumer showed for all $r$ such that $\min(r, m-r) = \omega(\log m)$ that most error patterns of weight at most $(d\log d/2) \cdot (1 - \frac{\log m}{\log d})$ can be recovered in $RM(m,r)$.
To make sense of the parameters, we note that when $r = m - \omega(\log m)$ 
the weight is roughly $(d\log d/2)$.
To compare this result to ours, we first consider the case when $r=m - o(\sqrt{m/\log m})$.
Here the algorithm of \cite{dumer3} can correct roughly $2^{o(\sqrt{m/\log m})}$ random errors in $RM(m,r)$ whereas \autoref{thm:main:informal} gives an algorithm for correcting roughly $m^{o(\sqrt{m/ \log m})} \approx (d \log d)^{O(\log m)}$ random errors.

Further, even for the case $r = (1 -\rho) m$, where $\rho<1/2$ is a constant, the bound in the above result of \cite{dumer3} is equal to $O(d\log d)$.
On the other hand, assuming a positive answer to \autoref{Q:RM:BEC}, \autoref{thm:main:conj:informal} implies an efficient decoding algorithm for $RM(m,(1-\rho)m)$ that can decode from, roughly, $\binom{m}{\frac{1}{2}\rho m} = d^{O(\log 1/\rho)}$ random errors, for this case.


\begin{figure}[h]
\begin{center}
\begin{tikzpicture}[framed]
\draw[thick] (0,0) -- (12,0);
\draw (0,-0.25) -- (0,0.25);
\draw (12,-0.25) -- (12,0.25);
\draw (6,-0.25) -- (6,0.25);
\node at (6,-0.5) {\scriptsize $m/2$};
\node at (0,-0.5) {\scriptsize $0$};
\node at (12,-0.5) {\scriptsize $m$};

\draw [decorate,decoration={brace,amplitude=5pt},yshift=10pt]
(0,0) -- (1,0) node [black,midway,yshift=0.4cm]
{\scriptsize $\log m$};

\draw [decorate,decoration={brace,amplitude=5pt},yshift=10pt]
(11,0) -- (12,0) node [black,midway,yshift=0.4cm]
{\scriptsize $\log m$};
\draw (1,-0.25) -- (1,0.25);
\draw (11,-0.25) -- (11,0.25);

\draw (2,-0.25) -- (2,0.25);
\node at (2,-0.5) {\scriptsize $o(\sqrt{m})$};


\draw (9,-0.25) -- (9,0.25);
\draw [decorate,decoration={brace,amplitude=10pt,mirror},yshift=-20pt]
(9,0) -- (12,0) node [black,midway,yshift=-0.7cm]
{\scriptsize $o(\sqrt{m/\log m})$};

\node[anchor=east] at (-0.5,0) {\scriptsize Degree ($r$) of $RM(m,r)$:};


\node[anchor=east] at (-0.5,1.5) {\scriptsize \cite{dumer1,dumer2,dumer3}:};

\draw[very thick, draw=blue!80] (0,1.5) -- (0.9,1.5) node [black, midway, above] {\scriptsize $\approx n/2$ errors};

\draw[very thick, draw=red!80] (1.1,1.5) -- (10.9,1.5) node [black, midway, above] {\scriptsize $O(d \log d)$ errors} node [black, midway, below] {\scriptsize $O(n \log n)$ time algorithm};


\node[anchor=east] at (-0.5,3) {\scriptsize Our results:};

\draw[very thick, draw=blue!80] (0,3) -- (2,3) node [black, midway, above] {\scriptsize $\approx n/2$ errors} node [black, midway, below] {\scriptsize $O(n^4)$ time algo.};

\draw[very thick, draw=red!80] (9,3) -- (12,3) node [black, midway, above] {\scriptsize $(d \log d)^{O(\log m)}$ errors} node [black, midway, below] {\scriptsize $n^{1 + o(1)}$ time algo.};

\draw[very thick, loosely dotted, draw=brown!80] (2,3) -- (9,3) node [black!70, midway, above] {\scriptsize $(d \log d)^{\omega(1)}$ errors} node [black!70, midway, below] {\scriptsize assuming positive answer to \autoref{Q:RM:BEC}};
\end{tikzpicture}
\end{center}
\caption{Comparison with \cite{dumer1,dumer2,dumer3}}
\label{fig:comparison-dumer}
\end{figure}
We now turn to other regimes of parameters, specifically RM codes of low rate.
For the special case of $r=1,2$, \cite{hell} shows that $RM(m,r)$ codes are capacity-achieving.
In \cite{sidel}, it is shown that RM codes of fixed order (i.e., $r=O(1)$) can decode most error patterns of weight up to $\frac{1}{2}n(1-\sqrt{c(2^r-1)m^r/ n r!})$, where $c> \ln(4)$.
In \cite{AbbeSW15}, Abbe et al.\ settled the question for low order Reed-Muller codes proving that $RM(m,r)$ codes achieve capacity for the BSC when $r=o(m)$ \cite{AbbeSW15}.
We note however that all the results mentioned here are existential in nature and do not provide an efficient decoding algorithm.

A line of work by Dumer \cite{dumer1,dumer2} based on recursive algorithms (that exploit the recursive structure of Reed-Muller codes), obtains algorithmic results mainly for low-rate regimes.
In \cite{dumer1}, it is shown that for a fixed degree, i.e., $r=O(1)$, an algorithm of complexity $O(n \log n)$ can correct most error patterns of weight up to $n(1/2 - \e)$ given that $\e$ exceeds $n^{-1/2^r}$.
In \cite{dumer3}, this is improved to errors of weight up to $\frac{1}{2}n(1-(4m/d)^{1/2^r})$ for all $r=o(\log m )$.
The case $r=\omega(\log m)$ is also covered in \cite{dumer3}, as described above.

We note that all the efficient algorithms mentioned above (both for high- and low-rate) rely on the so called Plotkin construction of the code, that is, on its recursive structure (expanding an $m$-variate polynomial according to the $m$-th variable $f(x_1,\ldots,x_m)=x_m g(x_1,\ldots,x_{m-1})+h(x_1,\ldots,x_{m-1})$), whereas our approach is very different.

We summarize and compare our results with \cite{dumer1,dumer2,dumer3} for various range of parameters in \autoref{fig:comparison-dumer} (degree is $r$ and distance is $d = 2^{m-r}$).
The dotted region in \autoref{fig:comparison-dumer} corresponds to the uncovered region in \autoref{fig:rm-bec-regime} beyond $m/2$, via  the connection given in \autoref{thm:main:conj:informal}.


\subsection{Notation and terminology}\label{sec:notation}
Before explaining the idea behind the proofs of our results we need to introduce some notation and parameters.
We shall use the same notation as \cite{AbbeSW15}.

\begin{itemize}
\item We denote by $\Monset(m,r)$ the set of $m$-variate monomials over $\F_2$ of degree at most $r$.

\item For non-negative integers $r \leq m$, $RM(m,r)$ denotes the
  Reed-Muller code whose codewords are the evaluation vectors of all multivariate polynomials of degree at most $r$ on $m$ boolean variables.
The maximal degree $r$ is sometimes called the order of the code.
The block length of the code is $n=2^m$, the dimension $k=k(m,r)=\sum_{i=0}^r \binom{m}{i} \eqdef \binom{m}{\le r}$, and the distance $d=d(m,r)=2^{m-r}$.
The code rate is given by $R=k(m,r)/n$.

\item We use $E(m,r)$ to denote the ``evaluation matrix'' of parameters $m,r$, whose rows are indexed by all monomials in  $\Monset(m,r)$, and whose columns are indexed by all vectors in $\F_2^m$. The value at entry $(M,\vecu)$ is equal to $M(\vecu)$. 
For $\vecu\in \F_2^m$, we denote by $\vecu^r$ the column of $E(m,r)$ indexed by $\vecu$, which is a $k$-dimensional vector, consisting of all evaluations of degree $\leq r$ monomials at $\vecu$. For a subset of columns $U \subseteq \F_2^m$ we denote by $U^r$ the corresponding submatrix of $E(m,r)$.

\item $E(m,r)$ is a generator matrix for $RM(m,r)$. The duality
  property of Reed-Muller codes (see, for example, \cite{sloane-book}) states that $E(m,m-r-1)$ is a parity-check matrix for $RM(m,r)$, or equivalently, $E(m,r)$ is a parity-check matrix for $RM(m,m-r-1)$.

\item We associate with a subset $U\subseteq \F_2^m$ its
  characteristic vector $\1_U \in \F_2^n$.
We often think of the vector $\1_U$ as denoting either an {\em erasure pattern} or an {\em error pattern}.

\item For a positive integer $n$, we use the standard notation $[n]$ for the set $\{1,2,\ldots,n\}$.

\end{itemize}

We next define what we call the degree-$r$ syndrome of a set.
\begin{definition}[Syndrome]
  Let $r \le m$ be two positive integers.
The \emph{degree-$r$ syndrome}, or simply \emph{$r$-syndrome} of a set $U=\inbrace{\vecu_1,\ldots,\vecu_t}\subseteq\F_2^m$ is the $\binom{m}{\le r}$-dimensional vector $\alpha$ whose entries are indexed by all monomials $M \in \Monset(m,r)$, such that
\[
\alpha_M \eqdef \sum_{i=1}^t M(\vecu_i).
\]
\end{definition}
Note that this is nothing but the syndrome of the error pattern $\1_U \in \F_2^{n}$ in the code $RM(m, m-r-1)$ (whose parity check matrix is the generator matrix of $RM(m,r)$).


\subsection{Proof techniques}\label{sec:techniques}

In this section we describe our approach for constructing a decoding algorithm.
Recall that the algorithm has the property that is decodes in $RM(m, m-2r-2)$ any error pattern $U$ which is correctable from erasures in $RM(m, m-r-1)$.
Such patterns are characterized by the property that the columns of $E(m,r)$ corresponding to the elements of $U$ are linearly independent vectors.
Thus, it suffices to give an algorithm that succeeds whenever the error pattern $\1_U$ gives rise to such linearly independent columns, which happens with probability $1-o(1)$ for the regime of parameters mentioned in \autoref{thm:main-low-degree} and \autoref{thm:main:informal}.

So let us assume from now on that the error pattern $\1_U$ corresponds to a set of linearly independent columns in $E(m,r)$.
Notice that by the choice of our parameters, our task is to recover $U$ from the degree $(2r+1)$-syndrome of $U$.
Furthermore, we want to do so efficiently.
For convenience, let $t = |U|=(1-o(1))\binom{m}{\le r}$.

Recall that the degree-$(2r+1)$ syndrome of $U$ is the $\binom{m}{\le 2r+1}$-long vector $\alpha$ such that for every monomial $M\in \Monset(m,2r+1)$, $\alpha_M = \sum_{i=1}^t M(\vecu_i)$.
Imagine now that we could somehow find degree-$r$ polynomials $f_i(x_1,\ldots,x_m)$ satisfying $f_i(u_j)=\delta_{i,j}$.
Then, from knowledge of $\alpha$ and, say, $f_1$, we could compute the following sums:
\[
\sigma_\ell = \sum_{i=1}^{t} (f_1 \cdot x_\ell) (\vecu_i), \quad \ell \in [m].
\]
Indeed, if we know $\alpha$ and $f_1$ then we can compute each $\sigma_\ell$, as it just involves summing several coordinates of $\alpha$ (since $\deg(f_1\cdot x_\ell) \leq r+1$).
We now observe that
\[
\sigma_\ell = \sum_{i=1}^{t} (f_1 \cdot x_\ell) (\vecu_i) = (f_1 \cdot x_\ell) (\vecu_1) = ( \vecu_1)_\ell.
\]
In other words, knowledge of such an $f_1$ would allow us to discover all coordinates of $\vecu_1$ and in particular, we will be able to deduce $\vecu_1$, and similarly all other $\vecu_i$ using $f_i$.

Our approach is thus to find such polynomials $f_i$.
What we will do is set up a system of linear equations in the coefficients of an unknown degree $r$ polynomial $f$ and show that $f_1$ is the unique solution to the system.
Indeed, showing that $f_1$ is a solution is easy and the hard part is proving that it is the unique solution.

To explain how we set the system of equations, let us assume for the time being that we actually know $\vecu_1$.
Let $f = \sum_{M\in \Monset(m,r)} c_M \cdot M$, where we think of $\{c_M\}$ as unknowns.
Consider the following linear system:
\begin{enumerate}
\item $\sum\limits_{i=1}^t f(\vecu_i) \sspaced{=} f(\vecu_1) \sspaced{=} 1$,
\item $\sum\limits_{i=1}^t (f \cdot M)(\vecu_i) \sspaced{=} M(\vecu_1)$,  for all $M \in \Monset(m,r)$.
\item $\sum\limits_{i=1}^t (f \cdot M \cdot (x_\ell + (\vecu_1)_\ell + 1))(\vecu_i) \sspaced{=} M(\vecu_1)$ for every $\ell \in [m]$ and for all $M \in \Monset(m,r)$.
\end{enumerate}
In words, we have a system of $2 + \binom{m}{\le r} + m\cdot \binom{m}{\le r}$ equations in $\binom{m}{\le r}$ variables (the coefficients of $f$).
Observe that $f=f_1$ is indeed a solution to the system.
To prove that it is the unique solution we rely on the fact that the columns of $U^r$ are linearly independent and hence expressing $\vecu_1^r$ as a linear combination of those columns can be done in a unique way.

Now we explain what to do when we do not know $\vecu_1$.
Let $\vecv=(v_1,\ldots,v_m)\in \F_2^m$.
We modify the linear system above to:
\begin{enumerate}
\item $\sum\limits_{i=1}^t f(\vecu_i) \sspaced{=} f(\vecv) \sspaced{=} 1$,
\item $\sum\limits_{i=1}^t (f \cdot M)(\vecu_i) \sspaced{=} M(\vecv)$ for all $M \in \Monset(m,r)$. \label{item:in-span}
\item $\sum\limits_{i=1}^t (f \cdot M \cdot (x_\ell + v_\ell + 1))(\vecu_i) \sspaced{=} M(\vecv)$ for all $\ell \in [m]$ and $M \in \Monset(m,r)$. \label{item:in-set}
\end{enumerate}
Now the point is that one can prove that if a solution exists then it must be the case that $\vecv$ is an element of $U$.
Indeed, the set of equations in \autoref{item:in-span} implies that $\vecv^r$ is in the linear span of the columns of $U^r$.
The linear equations in \autoref{item:in-set} then imply that $\vecv$ must actually be in the set $U$.

Notice that what we actually do amounts to setting, for every $\vecv\in\F_2^m$, a system of linear equations of size roughly $\binom{m}{\le r}$.
Such a system can be solved in time $\poly\inparen{\binom{m}{\le r}}$.
Thus, when we go over all $\vecv\in\F_2^m$ we get a running time of $2^m \cdot \poly\inparen{\binom{m}{\le r}}$, as claimed.

Our proof can be viewed as an algorithmic version of the proof of Theorem 1.8 of Abbe et al.\ \cite{AbbeSW15}.
That theorem asserts that when the columns of $U^r$ are linearly independent, the $(2r+1)$-syndrome of $U$ is unique.
In their proof of the theorem they first use the $(2r)$-syndrome to claim that if $V$ is another set with the same $(2r)$-syndrome then the column span of $U^r$ is the same as that of $V^r$.
Then, using the degree $(2r+1)$ monomials they deduce that $U=V$.
This is similar to what our linear system does, but, in contrast, \cite{AbbeSW15} did not have an efficient algorithmic version of this statement.

\section{Decoding Algorithm For Reed-Muller Codes}\label{sec:decode}

We begin with the following basic linear algebraic fact.
\begin{lemma}
\label{lem:dual-poly}
Let $\vecu_1,\dots, \vecu_t \in \F_2^m$ such that $\inbrace{\vecu_1^{r}, \dots, \vecu_t^{ r}}$ are linearly independent.
Then, for every $i \in [t]$, there exists a polynomial $f_i$ so that for every $j \in [t]$,
\[
f_i (\vecu_j) = \delta_{i,j} = \begin{cases}
1 & \text{if } i=j \\
0 & \text{otherwise}.
\end{cases}
\]
\end{lemma}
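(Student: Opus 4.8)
The plan is to build each $f_i$ directly from the linear independence hypothesis. Since $\{\vecu_1^r,\dots,\vecu_t^r\}$ are linearly independent vectors in $\F_2^{\binom{m}{\le r}}$, they span a $t$-dimensional subspace, and on that subspace the $i$-th coordinate functional in the dual basis is well defined: there is a linear functional $\lambda_i$ on $\F_2^{\binom{m}{\le r}}$ with $\lambda_i(\vecu_j^r) = \delta_{i,j}$ for all $j \in [t]$. (Such a $\lambda_i$ exists because one can extend $\{\vecu_1^r,\dots,\vecu_t^r\}$ to a basis of the whole space and read off the appropriate coordinate in that basis.) Now a linear functional on $\F_2^{\binom{m}{\le r}}$ is given by a vector of coefficients $(c_M)_{M \in \Monset(m,r)}$ via $\lambda_i(\vecw) = \sum_M c_M w_M$. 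Setting $f_i \eqdef \sum_{M \in \Monset(m,r)} c_M \cdot M$, which is a polynomial of degree at most $r$, we get for each $j$ that
\[
f_i(\vecu_j) = \sum_{M \in \Monset(m,r)} c_M \cdot M(\vecu_j) = \lambda_i(\vecu_j^r) = \delta_{i,j},
\]
where the middle equality uses that the coordinates of the column vector $\vecu_j^r$ are exactly the evaluations $M(\vecu_j)$ over $M \in \Monset(m,r)$. This is exactly the claimed property.

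The key steps in order are: (1) recall that $\vecu_j^r$ is, by definition of the evaluation matrix $E(m,r)$, the vector of evaluations of all degree-$\le r$ monomials at $\vecu_j$; (2) use linear independence of $\{\vecu_1^r,\dots,\vecu_t^r\}$ to produce the dual functional $\lambda_i$ (extend to a basis, take the dual basis vector, restrict); (3) identify the functional with a coefficient vector $(c_M)$; (4) package $(c_M)$ as the polynomial $f_i = \sum_M c_M M$ and verify $f_i(\vecu_j) = \lambda_i(\vecu_j^r) = \delta_{i,j}$ by unwinding the definitions.

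I do not expect any real obstacle here; this is a short linear-algebra fact and the only thing to be careful about is the bookkeeping between three equivalent objects — a polynomial of degree $\le r$, its coefficient vector indexed by $\Monset(m,r)$, and a linear functional on the column space of $E(m,r)$ — and the observation that evaluating the polynomial at $\vecu_j$ is the same as pairing the coefficient vector with the column $\vecu_j^r$. If one prefers to avoid invoking basis extension over $\F_2$ abstractly, an equivalent concrete route is to note that the $t \times \binom{m}{\le r}$ matrix whose rows are $\vecu_1^r,\dots,\vecu_t^r$ (i.e.\ the transpose of the submatrix $U^r$) has full row rank $t$, hence the linear system $U^r{}^{\!\top} c = e_i$ (with $e_i$ the $i$-th standard basis vector in $\F_2^t$) is solvable; any solution $c = (c_M)_M$ yields the desired $f_i$. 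Either way the argument is complete once the dictionary between the three representations is made explicit.
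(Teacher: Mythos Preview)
Your proof is correct and essentially identical to the paper's: the paper forms the $t \times \binom{m}{\le r}$ matrix with rows $\vecu_i^r$, notes it has full row rank by hypothesis, and takes $f_i$ to be any solution of the linear system sending the coefficient vector to $\vece_i$ --- exactly the ``concrete route'' you spell out at the end. Your dual-basis phrasing is just a reformulation of the same one-line argument.
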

\noindent For completeness, we give the short proof.
\begin{proof}
  Consider the matrix $U^r \in \F_2^{t \times \binom{m}{\le r}}$ whose $i$-th row is $\vecu_i^{ r}$.
A polynomial $f_i$ which satisfies the properties of the lemma is a solution to the linear system $U^r\vecx=\vece_i$, where $\vece_i \in \F_2^t$ is the $i$-th elementary basis vector (that is, $(\vece_i)_j=\delta_{i,j}$), and the $\binom{m}{\le r}$ unknowns are the coefficients of $f_i$.
By the assumption that $U$ is of full rank, indeed there exists a solution.
\end{proof}

\noindent 
The algorithm would proceed by making a guess $\vecv=(v_1,\dots, v_m) \in \F_2^m$ for one of the error locations.
If we could come up with an efficient way to \emph{verify} that the guess is correct, this would immediately yield a decoding algorithm.
We shall verify our guess by using the dual polynomials $f_1,\dots, f_t$ described above.
We shall find them by solving a system of linear equations that can be constructed from the $(2r+1)$-syndrome of $\inbrace{\vecu_1,\dots, \vecu_m}$.
We will need the following crucial, yet simple, observation.

\begin{observation}\label{obs:evalsum-from-syndrome}
  Let $f$ be any $m$-variate polynomial of degree at most $2r+1$, and $\vecu_1,\dots, \vecu_t \in \F_2^m$.
Then, the sum $\sum_{i=1}^t f(\vecu_i)$ can be computed given the $(2r+1)$-syndrome of $\inbrace{\vecu_1,\ldots,\vecu_t}$, in time $O\inparen{\binom{m}{2r+1}}$.
\end{observation}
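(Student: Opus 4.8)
The plan is to exploit the linearity of the map $f \mapsto \sum_{i=1}^t f(\vecu_i)$ together with the very definition of the syndrome. First, expand $f$ in the monomial basis: since $\deg f \le 2r+1$, write $f = \sum_{M \in \Monset(m,2r+1)} c_M \cdot M$ for scalars $c_M \in \F_2$. Substituting into the sum and swapping the order of summation gives
\[
\sum_{i=1}^t f(\vecu_i) \;=\; \sum_{i=1}^t \sum_{M \in \Monset(m,2r+1)} c_M \cdot M(\vecu_i) \;=\; \sum_{M \in \Monset(m,2r+1)} c_M \inparen{\sum_{i=1}^t M(\vecu_i)} \;=\; \sum_{M \in \Monset(m,2r+1)} c_M \cdot \alpha_M,
\]
where $\alpha$ is precisely the $(2r+1)$-syndrome of $\inbrace{\vecu_1,\dots,\vecu_t}$. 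Thus the quantity we want is just the inner product $\inangle{c,\alpha}$ over $\F_2$ of the coefficient vector of $f$ with the syndrome vector, which is computed in $O(\abs{\Monset(m,2r+1)}) = O\inparen{\binom{m}{2r+1}}$ field operations.

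The one point to be careful about is the representation of the input. I would assume, as is natural in our application (the polynomials $f$ to which this observation is applied are constructed explicitly as solutions of linear systems in the monomial basis), that $f$ is given by its list of coefficients $\{c_M\}_{M \in \Monset(m,2r+1)}$; then reading them off and forming the weighted sum above is immediate, and in particular no evaluation of $f$ anywhere on $\F_2^m$ is required. If instead $f$ were presented differently one would first convert to the monomial basis, which is still within the claimed time bound.

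There is essentially no mathematical obstacle here: the statement is an immediate consequence of linearity, and the only content is the observation that the syndrome already packages exactly the quantities $\sum_i M(\vecu_i)$ for every low-degree monomial $M$ — which is all that the evaluation sum of a degree-$\le 2r+1$ polynomial can depend on.
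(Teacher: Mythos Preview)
Your argument is correct and is essentially identical to the paper's own proof: expand $f$ in the monomial basis, interchange the two sums, and recognize the inner sums as the syndrome entries $\alpha_M$, yielding $\sum_M c_M\alpha_M$. The only addition is your remark about how $f$ is represented, which is a reasonable clarification but not needed for the paper's purposes.
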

\begin{proof}
  For any $M \in \Monset(m,2r+1)$, denote $\alpha_M = \sum_{i=1}^t M(\vecu_i)$ (so that $\alpha = (\alpha_M)_{M \in \Monset(M,2r+1)}$ is precisely the syndrome of $\inbrace{\vecu_1,\ldots,\vecu_t}$).
Write $f = \sum_{M \in \Monset(m,2r+1)} c_M \cdot M$, where $c_M \in \F_2$, then
\begin{align*}
\sum_{i=1}^t f(\vecu_i) &\spaced{=} \sum_{i=1}^t  \sum_{M \in \Monset(m,2r+1)} c_M \cdot M(\vecu_i)\\& \spaced{=} \sum_{M \in \Monset(m,2r+1)} c_M \inparen{\sum_{i=1}^t M(\vecu_i)} \spaced{=} \sum_{M \in \Monset(m,2r+1)} c_M \alpha_M. \qedhere
\end{align*}

\end{proof}

The following lemma shows how to verify a guess for an error location.
It is the main ingredient in the analysis of our algorithm and the reason why it works.
Basically, the lemma gives a system of linear equations whose solution enables us to decide whether a given $\vecv\in \F_2^m$ is a corrupted coordinate or not, without knowledge of the set of errors $U$ but only of its syndrome.
In a sense, this lemma is analogous to the Berlekamp-Welch algorithm, which also gives a system of linear equations whose solution reveals the set of erroneous locations (\cite{BerWel}, and see also the exposition in Chapter 13 of \cite{GRSBook}).

\begin{lemma}[Main Lemma]
\label{lem:decode}
Let $\vecu_1,\dots, \vecu_t \in \F_2^m$ such that $\inbrace{\vecu_1^{r}, \dots, \vecu_t^{ r}}$ are linearly independent, and $\vecv=(v_1,\ldots,v_m) \in \F_2^m$.
Suppose there exists a multilinear polynomial $f \in \F_2[x_1,\ldots,x_m]$ with $\deg(f) \le r$ such that for every monomial $M \in \Monset(m,r)$,
\begin{enumerate}
\item \label{item:sum1} $\sum\limits_{i=1}^t f(\vecu_i) \sspaced{=} f(\vecv) \sspaced{=} 1$,
\item \label{item:2r} $\sum\limits_{i=1}^t (f \cdot M)(\vecu_i) \sspaced{=} M(\vecv)$, and
\item \label{item:2r+1} $\sum\limits_{i=1}^t (f \cdot M \cdot (x_\ell + v_\ell + 1))(\vecu_i) \sspaced{=} M(\vecv)$ for every $\ell \in [m]$.
\end{enumerate}
Then there exists $i \in [t]$ such that $\vecv=\vecu_i$.

\end{lemma}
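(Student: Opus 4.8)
The plan is to read the three families of identities as linear relations among the vectors $\vecu_1^r,\dots,\vecu_t^r$ over $\F_2$, and then let the linear-independence hypothesis collapse them. First I would set $w_i := f(\vecu_i) \in \F_2$, so that $(f\cdot M)(\vecu_i) = w_i\, M(\vecu_i)$ for every monomial $M$. Then the family in item~\ref{item:2r}, read coordinate by coordinate over $M \in \Monset(m,r)$, says exactly that $\sum_{i=1}^t w_i\, \vecu_i^r = \vecv^r$ as vectors in $\F_2^{\binom{m}{\le r}}$; so $\vecv^r$ lies in the span of the $\vecu_i^r$, and by linear independence the $w_i$ are the unique scalars expressing it. The first equation of item~\ref{item:sum1} is simply the instance $M \equiv 1$ of this, giving $\sum_i w_i = 1$; I will only use that this sum is nonzero in $\F_2$, and hence that $w_i = 1$ for at least one index $i$.

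The heart of the argument is item~\ref{item:2r+1}. Over $\F_2$ we have $x_\ell + v_\ell + 1 = 1 + (x_\ell + v_\ell)$, and $(x_\ell + v_\ell)$ evaluates to $0$ at $\vecu_i$ exactly when $(\vecu_i)_\ell = v_\ell$. Hence $(f \cdot M \cdot (x_\ell + v_\ell + 1))(\vecu_i) = w_i M(\vecu_i) + w_i M(\vecu_i)\cdot((\vecu_i)_\ell + v_\ell)$, and summing over $i$ the first summand contributes $\sum_i w_i M(\vecu_i) = M(\vecv)$ by item~\ref{item:2r}. So, after cancelling $M(\vecv)$ from both sides, item~\ref{item:2r+1} becomes $\sum_{i=1}^t w_i\big((\vecu_i)_\ell + v_\ell\big)M(\vecu_i) = 0$ for every $\ell \in [m]$ and every $M \in \Monset(m,r)$. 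Fixing $\ell$ and letting $M$ range over $\Monset(m,r)$, this is precisely the vector identity $\sum_{i=1}^t c_i^{(\ell)}\, \vecu_i^r = \mathbf{0}$, where $c_i^{(\ell)} := w_i\big((\vecu_i)_\ell + v_\ell\big) \in \F_2$.

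Now I apply linear independence of $\{\vecu_1^r,\dots,\vecu_t^r\}$ a second time: it forces $c_i^{(\ell)} = 0$ for all $i \in [t]$ and all $\ell \in [m]$. Consequently, any index $i$ with $w_i = 1$ must satisfy $(\vecu_i)_\ell = v_\ell$ for every $\ell$, i.e.\ $\vecu_i = \vecv$. Since $\sum_i w_i = 1$, such an index exists, and this is exactly the conclusion of the lemma.

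I do not expect a genuine obstacle: the whole argument is two applications of the fact that a vanishing linear combination of linearly independent vectors has all coefficients zero, and the only thing to get right is the $\F_2$ bookkeeping (the identities $w_i^2 = w_i$ and $x_\ell + v_\ell + 1 = 1 + (x_\ell + v_\ell)$). The one point worth articulating is \emph{why} the system is posed this way: the multiplier $x_\ell + v_\ell + 1$ in item~\ref{item:2r+1} is designed so that subtracting item~\ref{item:2r} produces relations whose $i$-th coefficient $w_i((\vecu_i)_\ell + v_\ell)$ vanishes precisely when $\vecu_i$ already agrees with $\vecv$ in coordinate $\ell$; linear independence then forces every surviving index to satisfy $\vecu_i = \vecv$. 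It is also routine to check, using $\deg f \le r$ and Observation~\ref{obs:evalsum-from-syndrome}, that all the polynomials occurring in items~\ref{item:2r}--\ref{item:2r+1} have degree at most $2r+1$, so these hypotheses can indeed be read off the $(2r+1)$-syndrome alone --- though that observation plays no role in the implication itself.
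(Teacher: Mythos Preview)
Your proof is correct and is essentially the same argument as the paper's, just written in coefficient language rather than set language: where the paper defines $J = \{j : f(\vecu_j)=1\}$ and $J_\ell = \{j \in J : (\vecu_j)_\ell = v_\ell\}$ and uses uniqueness of the linear combination $\sum_{i\in J}\vecu_i^r = \sum_{i\in J_\ell}\vecu_i^r = \vecv^r$ to force $J = J_\ell$, you set $w_i = f(\vecu_i)$, subtract item~\ref{item:2r} from item~\ref{item:2r+1} to get $\sum_i c_i^{(\ell)}\vecu_i^r = 0$, and invoke linear independence to kill the coefficients --- which is the same step. No gaps.
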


\noindent Observe that if indeed $\vecv=\vecu_i$ for some $i \in [t]$,
then the polynomial $f_i$ guaranteed by \autoref{lem:dual-poly} satisfies those equations.
Hence, the lemma should be interpreted as saying the converse: that if there exists such a solution, then $\vecv=\vecu_i$ for some $i$.
Further, given the $(2r+1)$-syndrome of $\inbrace{\vecu_1,\ldots,\vecu_t}$ as input, \autoref{obs:evalsum-from-syndrome} shows that each of the above constraints are linear constraints in the coefficients of $f$.
Thus, finding such an $f$ is merely solving a system of $O\inparen{\binom{m}{\leq r}}$ linear equations in $\binom{m}{\leq r}$ unknowns and can be done in $\poly\inparen{\binom{m}{\leq r}}$ time.

\begin{proof}[Proof of \autoref{lem:decode}]
Let $J = \inbrace{j \mid f(\vecu_j) = 1}$. Note that by \autoref{item:sum1} it holds that $J \neq \emptyset$. 

\begin{addmargin}[2em]{5em}
  \begin{subclaim}
    $\sum\limits_{i \in J} \vecu_i^{ r} =\vecv^{ r}$.
  \end{subclaim}
  \begin{myproof}{Subclaim}
    Let $M \in \Monset(m,r)$.
We show that $\sum_{i \in J} M(\vecu_i) = M(\vecv)$, i.e.,\ that the $M$'th coordinate of $\sum_{i \in J} \vecu_i^{ r}$ is equal to that of $\vecv^{ r}$.
Indeed, as $f$ satisfies the constraints in \autoref{item:2r},
    \[
    M(\vecv) = \sum_{i=1}^t (f \cdot M)(\vecu_i) = \sum_{i \in J} (f \cdot M)(\vecu_i) + \sum_{i \not\in J} (f \cdot M)(\vecu_i)
    = \sum_{i \in J} M(\vecu_i). \myqedhere
    \] 

  \end{myproof}
\end{addmargin}
For any $\ell \in [m]$, let $J_{\ell} = \inbrace{j \mid f(\vecu_j)=1 \;\text{and}\; (\vecu_j)_\ell = v_\ell} \subseteq J$.
Observe that this definition implies that for every $j \in [t]$, the index $j$ is in $J_\ell$ if and only if $(f\cdot (x_\ell + v_\ell + 1))(\vecu_j)=1$.
Using a similar argument, we can show the following.

\begin{addmargin}[2em]{5em}
  \begin{subclaim}
    For every $\ell \in [m]$, 
    \begin{equation}
      \label{eq:sum-J_ell}
      \sum_{i \in J_\ell} \vecu_i^{ r} = \vecv^{ r}.
    \end{equation}
  \end{subclaim}
  \begin{myproof}{Subclaim}
    Again, for any $M \in \Monset(m,r)$ the constraints in \autoref{item:2r+1} imply that
    \[
    M(v) = \sum_{i=1}^t (f \cdot M \cdot (x_\ell + v_\ell + 1))(\vecu_i) = \sum_{i \in J_\ell} M(\vecu_i). \myqedhere
    \]
  \end{myproof}
\end{addmargin}

\noindent From the above claims,
\[
\vecv^r = \sum_{i \in J} \vecu_i^r = \sum_{i\in J_1} \vecu_i^r = \dots = \sum_{i \in J_m} \vecu_i^r.
\]
By the linear independence of $\inbrace{\vecu_1^{ r}, \dots, \vecu_t^{r}}$, it follows that $J = J_1 = J_2 = \cdots = J_m$.
Indeed, there is a unique linear combination of $\inbrace{\vecu_1^r,\ldots,\vecu_t^r}$ that gives $\vecv^r$.
The only vector which can be in the (non-empty) intersection $\bigcap_{k=1}^m J_k$ is $\vecv$, and so there exists $i \in [t]$ so that $\vecu_i = \vecv$.
\end{proof}

\autoref{lem:decode} implies a natural algorithm for decoding from $t$ errors indexed by vectors $\inbrace{\vecu_1,\ldots,\vecu_t}$, assuming $\inbrace{\vecu_1^{ r}, \dots, \vecu_t^{ r}}$ are linearly independent, that we write down explicitly in \autoref{alg:decoding}.
\begin{algorithm}
  \caption{: Reed-Muller Decoding}
  \label{alg:decoding}
\begin{algorithmic}[1]
  \Require{A $(2r+1)$-syndrome of $\inbrace{\vecu_1,\ldots,\vecu_t}$}
  \State{$\mathcal{E} = \emptyset$}
  \ForAll{$\vecv=(v_1,\dots, v_m) \in \F_2^m$}
  \State{Solve for a polynomial $f \in \F_2[x_1,\dots, x_m]$ of degree at most $r$: \begin{itemize}
    \item $\sum\limits_{i=1}^t f(\vecu_i)=f(\vecv) = 1$,
    \item $\sum\limits_{i=1}^t (f \cdot M)(\vecu_i) = M(\vecv)$ for all $M \in \Monset(m,r)$. 
    \item $\sum\limits_{i=1}^t (f \cdot M \cdot (x_\ell + v_\ell + 1))(\vecu_i) = M(\vecv)$ for all $\ell \in [m]$ and $M \in \Monset(m,r)$. 
    \end{itemize}}
  \If{there is a polynomial $f$ that satisfies the above system of equations}
  \State{Add $\vecv$ to the set $\mathcal{E}$. }
  \EndIf
  \EndFor
  \State{{\bf return} the set $\mathcal{E}$ as the error locations. }
\end{algorithmic}
\end{algorithm}

\begin{theorem}
\sloppy
\label{thm:decode-algo}
Given the $(2r+1)$-syndrome of $t$ unknown vectors $\inbrace{\vecu_1,\ldots,\vecu_t} \subseteq \F_2^m$ such that $\inbrace{\vecu_1^{ r}, \dots, \vecu_t^{ r}}$ are linearly independent, \autoref{alg:decoding} outputs $\inbrace{\vecu_1,\ldots,\vecu_t}$, runs in time $2^m \cdot \poly(\binom{m}{\le r})$ and can be realized using a circuit of depth $\poly(m) = \poly(\log n)$.
\end{theorem}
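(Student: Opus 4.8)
\textbf{Proof proposal for \autoref{thm:decode-algo}.}

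The plan is to establish three things: correctness (the output set $\mathcal{E}$ equals $\{\vecu_1,\ldots,\vecu_t\}$), the running time bound, and the low-depth (hence $\mathsf{NC}$) realizability. The heart of the argument is correctness, and this splits into two inclusions. For the inclusion $\{\vecu_1,\ldots,\vecu_t\} \subseteq \mathcal{E}$: fix $\vecv = \vecu_i$ for some $i \in [t]$. By \autoref{lem:dual-poly}, since the columns $\{\vecu_1^r,\ldots,\vecu_t^r\}$ are linearly independent, there is a degree-$\le r$ polynomial $f_i$ with $f_i(\vecu_j) = \delta_{i,j}$. I would check directly that $f_i$ satisfies all three families of constraints in the algorithm's linear system: constraint~1 becomes $\sum_j f_i(\vecu_j) = 1 = f_i(\vecu_i)$; constraint~2 becomes $\sum_j (f_i \cdot M)(\vecu_j) = M(\vecu_i) = M(\vecv)$; and constraint~3 becomes $\sum_j (f_i \cdot M \cdot (x_\ell + v_\ell + 1))(\vecu_j) = M(\vecu_i)$, using that $(x_\ell + v_\ell + 1)(\vecu_i) = 1$ when $v_\ell = (\vecu_i)_\ell$. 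So the solver finds some polynomial (possibly $f_i$ itself, possibly another solution), hence $\vecv$ is added to $\mathcal{E}$. For the reverse inclusion $\mathcal{E} \subseteq \{\vecu_1,\ldots,\vecu_t\}$: if $\vecv \in \mathcal{E}$, then by construction there exists a degree-$\le r$ polynomial $f$ satisfying the three constraint families, which are exactly the hypotheses of \autoref{lem:decode}; that lemma then immediately yields $\vecv = \vecu_i$ for some $i$. Combined, $\mathcal{E} = \{\vecu_1,\ldots,\vecu_t\}$.

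For the running time: the outer loop runs over all $2^m$ choices of $\vecv \in \F_2^m$. For each $\vecv$, by \autoref{obs:evalsum-from-syndrome} every coefficient on the left-hand side of each constraint is a linear form in the unknown coefficients $\{c_M\}_{M \in \Monset(m,r)}$ whose coefficients are computable from the given $(2r+1)$-syndrome in time $O(\binom{m}{2r+1})$ per constraint (one needs $\deg(f \cdot M) \le 2r$ and $\deg(f \cdot M \cdot (x_\ell + v_\ell + 1)) \le 2r+1$, which is why the $(2r+1)$-syndrome suffices). The system has $2 + (1+m)\binom{m}{\le r} = O(m\binom{m}{\le r})$ equations in $\binom{m}{\le r}$ unknowns over $\F_2$, so constructing and solving it (by Gaussian elimination) costs $\poly(\binom{m}{\le r})$. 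Multiplying by the $2^m$ iterations gives the stated bound $2^m \cdot \poly(\binom{m}{\le r})$; here I should note $m \le \poly(\binom{m}{\le r})$ so the extra factor of $m$ for the syndrome lookups and for the number of constraints is absorbed.

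For the $\mathsf{NC}$ claim: the $2^m$ iterations of the loop are independent and can be run in parallel. Within each iteration, computing the linear-system coefficients from the syndrome is a sum of (at most $\binom{m}{2r+1}$) bits, doable by a depth-$O(\log n)$ circuit; solving a linear system over $\F_2$ — equivalently, deciding feasibility of $A\vecx = \vecb$, i.e.\ testing whether $\rank(A) = \rank([A \mid \vecb])$ — is in $\mathsf{NC}^2$ over any field by the classical parallel linear-algebra algorithms (Mulmuley / Borodin--von zur Gathen--Hopcroft), and here the matrix dimensions are $\poly(n)$, so the depth is $\poly(\log n)$. Finally, assembling $\mathcal{E}$ from the per-$\vecv$ accept/reject bits is trivial in parallel. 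Hence the whole algorithm is realizable by a circuit of depth $\poly(m) = \poly(\log n)$.

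The main obstacle is really bookkeeping rather than any deep difficulty: one must verify carefully that constraint~3 only involves monomials of degree at most $2r+1$ (so that the $(2r+1)$-syndrome genuinely determines all the linear forms), and that \autoref{lem:decode} is applied with exactly the right $f$ — the lemma's hypotheses and the algorithm's constraints must be matched verbatim, including the normalization $f(\vecv) = 1$ which guarantees $J \ne \emptyset$ in the lemma's proof and thus forces $\vecv$ into the intersection $\bigcap_\ell J_\ell$. Everything else — the "forward" direction via \autoref{lem:dual-poly}, the counting of equations, and the parallel-linear-algebra citation for $\mathsf{NC}$ — is routine.
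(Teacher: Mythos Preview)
Your proposal is correct and follows essentially the same approach as the paper: correctness is obtained from the ``if and only if'' furnished by \autoref{lem:dual-poly} (for the inclusion $\{\vecu_i\}\subseteq\mathcal{E}$) and \autoref{lem:decode} (for $\mathcal{E}\subseteq\{\vecu_i\}$), the running time follows from \autoref{obs:evalsum-from-syndrome} plus Gaussian elimination on a $\poly(\binom{m}{\le r})$-sized system for each of the $2^m$ guesses, and the $\mathsf{NC}$ bound comes from parallelizing the guesses and invoking parallel linear algebra. The only cosmetic difference is that the paper cites \cite{MahajanV97} for the $\mathsf{NC}^2$ linear-algebra step whereas you cite Mulmuley / Borodin--von zur Gathen--Hopcroft; both are standard.
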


\begin{proof}
  The algorithm enumerates all vectors in $\F_2^m$, and for each candidate $\vecv$ checks whether there exists a solution to the linear system of $\poly(\binom{m}{\le r})$ equations in $\poly(\binom{m}{\le r})$ unknowns given in \autoref{lem:decode}.
\autoref{obs:evalsum-from-syndrome} shows that this system of linear equations can be constructed from the $(2r+1)$-syndrome in $\poly(\binom{m}{\leq r})$ time.

By \autoref{lem:dual-poly} and \autoref{lem:decode}, a solution to this system exists if and only if there is $i \in [t]$ so that $\vecv=\vecu_i$.
The bound on the running time follows from the description of the algorithm.
Furthermore, all $2^m=n$ linear systems can be solved in parallel, and each linear system can be solved with an $\mathsf{NC}^2$ circuit (see, e.g., \cite{MahajanV97}).
\end{proof}

Observe that the the proof of correctness for \autoref{alg:decoding} is valid, for any value of $r$, whenever the set of error locations $\inbrace{\vecu_1,\ldots,\vecu_t}$ satisfies the property that $\inbrace{\vecu_1^r,\ldots,\vecu_t^r}$ are linearly independent.
Therefore, we would like to apply \autoref{thm:decode-algo} in settings where $\inbrace{\vecu_1,\ldots,\vecu_t}$ are linearly independent with high probability.

For the constant rate regime, Kumar and Pfister \cite{KumarPfister15} and Kudekar, Mondelli, \c{S}a\c{s}o\u{g}lu and Urbanke \cite{KudekarMSU15}  proved that $RM(m, m-r-1)$ achieves capacity for $r=m/2 \pm O(\sqrt{m})$.

\begin{theorem}[\cite{KumarPfister15}, Theorem 23]
\label{thm:capacity-BEC-const-rate}
Let $r \le m$ be integers such that $r=m/2\pm O(\sqrt{m})$.
Then, for $t=(1-o(1))\binom{m}{\le r}$, with probability $1-o(1)$, for a set of vectors $\inbrace{\vecu_1,\ldots,\vecu_t} \subseteq \F_2^m$ chosen uniformly at random, it holds that $\inbrace{\vecu_1^{ r},\ldots,\vecu_t^{ r}}$ are linearly independent over $\F_2^{\binom{m}{\le r}}$.
\end{theorem}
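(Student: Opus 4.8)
The plan is to derive this statement directly from the main results of Kumar and Pfister~\cite{KumarPfister15} and of Kudekar, Mondelli, \c{S}a\c{s}o\u{g}lu and Urbanke~\cite{KudekarMSU15}, after translating its conclusion into the language of erasure decoding. The bridge is the duality property recorded in \autoref{sec:notation}: the evaluation matrix $E(m,r)$ is a parity-check matrix for $RM(m,m-r-1)$, and its columns are exactly the vectors $\inbrace{\vecu^r : \vecu \in \F_2^m}$. A standard fact about the binary erasure channel is that a linear code with parity-check matrix $H$ recovers an erasure pattern indexed by a set $S$ of coordinates if and only if the columns of $H$ indexed by $S$ are linearly independent. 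Applying this with $H = E(m,r)$, the event ``$\inbrace{\vecu_1^r,\ldots,\vecu_t^r}$ are linearly independent'' is precisely the event that the erasure pattern $\1_U$ is correctable in $RM(m,m-r-1)$ over the BEC, where $U = \inbrace{\vecu_1,\ldots,\vecu_t}$.

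Next I would match the parameters. The code $RM(m,m-r-1)$ has block length $n = 2^m$ and dimension $\binom{m}{\le m-r-1} = 2^m - \binom{m}{\le r}$, so it has exactly $\binom{m}{\le r}$ parity checks and rate $1 - \binom{m}{\le r}/2^m$. Since the capacity of a binary erasure channel with erasure probability $p$ is $1-p$, the statement ``$RM(m,m-r-1)$ achieves capacity for the BEC'' means that for every erasure probability $p = (1-o(1))\binom{m}{\le r}/2^m$, the code corrects a random set of erasures with probability $1-o(1)$. By the Chernoff--Hoeffding bound (in the form discussed in the preliminaries) one may pass from the i.i.d.\ erasure model with this $p$ to a uniformly random erasure set of fixed size $t = pn = (1-o(1))\binom{m}{\le r}$ at an additive cost of $o(1)$ in the failure probability. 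Combined with the translation from the previous paragraph, this is exactly the claimed statement about linear independence of the columns $\vecu_i^r$.

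Finally, I would invoke the theorems of~\cite{KumarPfister15, KudekarMSU15}: both establish that $RM(m,r')$ achieves capacity for the BEC whenever $r'$ lies in the constant-rate window $r' \in [m/2 - O(\sqrt{m}),\, m/2 + O(\sqrt{m})]$. Setting $r' = m-r-1$, the hypothesis $r = m/2 \pm O(\sqrt{m})$ places $r'$ inside this window, so the cited results apply and yield the theorem. The genuinely hard mathematical content --- proving that Reed--Muller codes attain BEC capacity throughout the constant-rate regime --- lies entirely in~\cite{KumarPfister15, KudekarMSU15} (it relies on sharp-threshold / EXIT-function arguments together with the large symmetry group of $RM$ codes) and is not reproved here. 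The only thing that needs care on our end is the bookkeeping identifying ``linearly independent columns of $E(m,r)$'' with ``correctable BEC erasure pattern in $RM(m,m-r-1)$'' and lining up the size $t = (1-o(1))\binom{m}{\le r}$ with the capacity threshold; that bookkeeping is the step I would be most careful to state precisely, though it presents no real obstacle.
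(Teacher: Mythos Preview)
Your proposal is correct, and in fact it is more than the paper itself provides: the paper does not prove this theorem at all but simply quotes it as \cite[Theorem~23]{KumarPfister15}. Your translation between ``linearly independent columns of $E(m,r)$'' and ``correctable BEC erasure pattern in $RM(m,m-r-1)$'' is exactly the standard dictionary the paper uses elsewhere (cf.\ the discussion around \autoref{Q:RM:BEC} and the reference to \cite[Corollary~2.9]{AbbeSW15}), so your bookkeeping is both correct and in the same spirit as the paper.
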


Letting $r=m/2-o(\sqrt{m})$ and looking at the code $RM(m, m-2r-2) = RM(m, o(\sqrt{m}))$ so that $\binom{m}{\le r} = (1/2 - o(1))2^m$, we get the following statement, stated earlier as \autoref{thm:main-low-degree}.

\begin{corollary}
  There exists a (deterministic) algorithm that is able to correct $t = (1/2 - o(1))2^m$ random errors in $RM(m, o(\sqrt{m})$ with probability $1-o(1)$.
The algorithm runs in time $2^m \cdot \inparen{\binom{m}{m/2-o(\sqrt{m}}}^3 \le n^4$.
\end{corollary}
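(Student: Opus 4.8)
The plan is to instantiate \autoref{thm:decode-algo} with a value of $r$ that simultaneously lies in the Kumar--Pfister window of \autoref{thm:capacity-BEC-const-rate} and makes the target code $RM(m,m-(2r+2))$ have degree $o(\sqrt m)$. Concretely, I would take $r=m/2-o(\sqrt m)$; then $m-(2r+2)=o(\sqrt m)$, so the code we decode is $RM(m,o(\sqrt m))$, and $r=m/2\pm O(\sqrt m)$, so \autoref{thm:capacity-BEC-const-rate} applies. A short estimate on partial sums of binomial coefficients (removing $o(\sqrt m)$ terms, each of size $O(2^m/\sqrt m)$, from $\binom{m}{\le m/2}=(1/2-o(1))2^m$) gives $\binom{m}{\le r}=(1/2-o(1))2^m$, hence $t:=(1-o(1))\binom{m}{\le r}=(1/2-o(1))2^m$ is exactly the number of errors we want to handle.

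Next I would spell out the decoder. Suppose the transmitted codeword is $c\in RM(m,m-(2r+2))$ and the received word is $y=c+\1_U$ for a uniformly random $U\subseteq\F_2^m$ with $|U|=t$. First, compute the syndrome of $y$ in the code $RM(m,m-(2r+2))$: since $m-(2r+2)=m-(2r+1)-1$, Reed--Muller duality says $E(m,2r+1)$ is a parity-check matrix of $RM(m,m-(2r+2))$, so $E(m,2r+1)\cdot y=E(m,2r+1)\cdot\1_U$, whose $M$-th entry is $\sum_{\vecu\in U}M(\vecu)$; that is, the syndrome of $y$ is exactly the $(2r+1)$-syndrome of $U$. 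Feed this syndrome to \autoref{alg:decoding} to obtain a set $\mathcal E\subseteq\F_2^m$, and output the corrected word $y+\1_{\mathcal E}$. The procedure is deterministic; the only randomness is in $U$.

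For correctness, \autoref{thm:capacity-BEC-const-rate} gives that with probability $1-o(1)$ over $U$ the columns $\inbrace{\vecu^{r}:\vecu\in U}$ are linearly independent over $\F_2^{\binom{m}{\le r}}$; conditioned on this event, \autoref{thm:decode-algo} guarantees \autoref{alg:decoding} returns $\mathcal E=U$ exactly, so $y+\1_{\mathcal E}=c$ and decoding succeeds. The running time is inherited directly from \autoref{thm:decode-algo}: a loop over the $2^m$ guesses $\vecv$, each building its linear system from the syndrome via \autoref{obs:evalsum-from-syndrome} and solving it over $\F_2$ by Gaussian elimination in $\poly\inparen{\binom{m}{\le r}}$ time (a cube in $\binom{m}{\le r}$); using $\binom{m}{\le r}=(1/2-o(1))2^m\le n$, this totals at most $2^m\cdot\inparen{\binom{m}{m/2-o(\sqrt m)}}^3\le n\cdot n^3=n^4$, as claimed. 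Finally, by a Chernoff bound the same conclusion covers the channel $\mathrm{BSC}(p)$ with $p=(1/2-o(1))$, not only a fixed-weight random error set.

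Because \autoref{thm:capacity-BEC-const-rate} and \autoref{thm:decode-algo} already supply all the substance --- random sets of the stated size are linearly independent with high probability, and the linear-algebraic decoder provably recovers the error set from the $(2r+1)$-syndrome --- there is no genuine technical obstacle in this corollary. The only place that needs care is the parameter bookkeeping: verifying that a single choice $r=m/2-o(\sqrt m)$ sits inside the window $m/2\pm O(\sqrt m)$, forces $m-(2r+2)=o(\sqrt m)$, and makes $(1-o(1))\binom{m}{\le r}$ equal $(1/2-o(1))2^m$, together with the routine but essential identification (through duality) of the received word's syndrome in $RM(m,m-(2r+2))$ with the $(2r+1)$-syndrome of the error pattern that \autoref{alg:decoding} expects as input.
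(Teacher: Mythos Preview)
Your proposal is correct and follows exactly the paper's approach: set $r=m/2-o(\sqrt m)$ so that $m-(2r+2)=o(\sqrt m)$ and $\binom{m}{\le r}=(1/2-o(1))2^m$, then combine \autoref{thm:capacity-BEC-const-rate} with \autoref{thm:decode-algo}. The paper's own argument is just this one-line parameter substitution; you have simply spelled out the duality identification of the syndrome and the running-time bookkeeping in more detail than the paper does.
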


Alternatively, we can pick $r=m/2-O(\sqrt{m})$ and correct $c \cdot 2^m$ random errors in the code $RM(m, O(\sqrt{m}))$, where $c$ is some positive constant that goes to zero as the constant hidden under the big $O$ increases.

For the high-rate regime, recall the following capacity achieving result proved in \cite{AbbeSW15}:
\begin{theorem}[\cite{AbbeSW15}, Theorem 4.5]
\label{thm:random-linear-independent}
\sloppy
Let $\epsilon>0$, $r \le m$ be two positive integers and $t < \binom{m-\log(\binom{m}{\le r})-\log(1/\epsilon)}{\le r}$.
Then, with probability at least $1-\epsilon$, for a set of vectors $\inbrace{\vecu_1,\ldots,\vecu_t} \subseteq \F_2^m$ chosen uniformly at random, it holds that $\inbrace{\vecu_1^{ r},\ldots,\vecu_t^{ r}}$ are linearly independent over $\F_2^{\binom{m}{\le r}}$.
\end{theorem}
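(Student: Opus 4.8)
The plan is to expose the random vectors $\vecu_1,\dots,\vecu_t$ one at a time and bound, by a union bound, the event that some $\vecu_i^r$ lands in the span $W_{i-1}:=\mathrm{span}\{\vecu_1^r,\dots,\vecu_{i-1}^r\}$ of the vectors seen so far; the family $\{\vecu_1^r,\dots,\vecu_t^r\}$ is linearly dependent precisely when this happens for some $i$. Put $b:=m-\log_2\binom{m}{\le r}-\log_2(1/\epsilon)$, so that by hypothesis $t<\binom{b}{\le r}$ and hence $\dim W_{i-1}\le i-1<\binom{b}{\le r}$ for every $i\le t$. Everything then reduces to the following \emph{span lemma}: any subspace $W\subseteq\F_2^{\binom{m}{\le r}}$ with $\dim W<\binom{b}{\le r}$ contains fewer than $2^b$ of the evaluation vectors $\{\vecu^r:\vecu\in\F_2^m\}$. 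Granting this, and observing that the $i-1$ distinct points $\vecu_1,\dots,\vecu_{i-1}$ already lie in the set $\{\vecu:\vecu^r\in W_{i-1}\}$, a fresh uniformly random $\vecu_i$ (drawn from the $2^m-(i-1)$ points not yet used) satisfies $\vecu_i^r\in W_{i-1}$ with conditional probability less than $\frac{2^b-(i-1)}{2^m-(i-1)}\le 2^{b-m}$. Summing over $i\le t<\binom{b}{\le r}\le\binom{m}{\le r}$ bounds the total failure probability by $\binom{m}{\le r}\cdot 2^{b-m}=\binom{m}{\le r}\cdot\frac{\epsilon}{\binom{m}{\le r}}=\epsilon$.

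It remains to prove the span lemma, which by passing to complements is equivalent to: every $A\subseteq\F_2^m$ with $|A|\ge 2^b$ has $\mathrm{rank}(A^r)\ge\binom{b}{\le r}$, i.e.\ the $|A|$ columns of $E(m,r)$ indexed by $A$ span a space of dimension at least $\binom{b}{\le r}$. The extremal configuration is a subcube: for $A=\{0,1\}^b\times\{0\}^{m-b}$ one has $|A|=2^b$ and $\mathrm{rank}(A^r)=\binom{b}{\le r}$ exactly, since the restrictions to $A$ of the $\binom{b}{\le r}$ monomials in $x_1,\dots,x_b$ of degree $\le r$ are linearly independent while every other degree-$\le r$ monomial vanishes on $A$. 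So the content of the lemma is that subcubes minimize this rank among sets of a given size.

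To prove the span lemma I would induct on $m$ using the recursive (Plotkin) structure of $E(m,r)$. Writing a degree-$\le r$ polynomial uniquely as $g(x_2,\dots,x_m)+x_1\,h(x_2,\dots,x_m)$ with $\deg g\le r$ and $\deg h\le r-1$, and splitting $A=A_0\sqcup A_1$ according to the value of the first coordinate, the submatrix $A^r$ becomes block triangular, with diagonal blocks the degree-$\le r$ evaluation matrix of $A_0$ and the degree-$\le(r-1)$ evaluation matrix of $A_1$ (both in the remaining $m-1$ variables); hence $\mathrm{rank}(A^r)\ge\mathrm{rank}(A_0^{r})+\mathrm{rank}(A_1^{r-1})$. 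Combined with Pascal's identity $\binom{b}{\le r}=\binom{b-1}{\le r}+\binom{b-1}{\le r-1}$, the induction closes immediately whenever $A$ has a coordinate that is either constant on $A$ (which lowers $m$ with $b$ unchanged) or splits $A$ into two halves each of size at least $2^{\lfloor\log_2|A|\rfloor-1}$. An equivalent and perhaps cleaner packaging is to pass to the standard monomials of the Boolean vanishing ideal of $A$: by Gr\"obner-basis theory these form a downset (order ideal) of subsets of $[m]$ of size exactly $|A|$, and for a degree-compatible monomial order one checks that $\mathrm{rank}(A^r)$ equals the number of standard monomials of degree $\le r$; this reduces the span lemma to the purely combinatorial assertion that a downset of subsets of $[m]$ of size $\ge 2^b$ contains at least $\binom{b}{\le r}$ sets of size $\le r$, a Kruskal--Katona-type fact that is again tight for the subcube $2^{[b]}$.

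I expect the one genuinely delicate point — in either packaging — to be the extremal analysis for ``spread-out'' configurations: a set $A$ (equivalently, a downset) can fail to admit any coordinate splitting it into comparable halves, the prototype being the star $\{0^m\}\cup\{e_1,\dots,e_m\}$, and there the one-coordinate recursion is too lossy (its lower levels are already large, but a naive induction does not see this). This is handled either by a standard compression/shifting argument from extremal set theory, which rearranges the downset into the colexicographic extremal configuration without decreasing the count of low-degree standard monomials, or by observing directly that any such $A$ must be concentrated near a single point, so that after translating that point to the origin $A$ is a set of low Hamming weight, for which the evaluation vectors are transparently of rank $\ge\binom{b}{\le r}$. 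Beyond this, the remaining work is routine bookkeeping — in particular checking that the floors and the possibly non-integer value of $b$ line up with the exact threshold $t<\binom{m-\log\binom{m}{\le r}-\log(1/\epsilon)}{\le r}$ in the statement.
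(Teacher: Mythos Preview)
The present paper does not prove this theorem: it is quoted from \cite{AbbeSW15} (their Theorem~4.5) and used here as a black box. So there is no proof in this paper to compare your attempt against.

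Your overall plan --- expose the points one at a time, union-bound the events $\vecu_i^{\,r}\in W_{i-1}$, and reduce everything to a ``span lemma'' bounding $\abs{\{\vecu:\vecu^{\,r}\in W\}}$ in terms of $\dim W$ --- is sound and is indeed the skeleton of the argument in \cite{AbbeSW15}. The union-bound arithmetic you wrote is correct, including the passage from sampling without replacement to the bound $2^{b-m}$.

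The genuine gap is the span lemma itself. Your Plotkin recursion does give the correct block inequality $\rank(A^r)\ge\rank(A_0^r)+\rank(A_1^{r-1})$, but, as you correctly diagnose, it does not close for unbalanced splits (the star is exactly the right obstruction). Your footprint reformulation is also correct --- for a degree-compatible monomial order, $\rank(A^r)$ equals the number of standard monomials of degree $\le r$, and these form a downset of size $|A|$ --- so the span lemma does reduce to the combinatorial claim that any downset $D\subseteq 2^{[m]}$ with $|D|\ge 2^b$ has at least $\binom{b}{\le r}$ members of size $\le r$. This claim is true and tight at $D=2^{[b]}$, but it is not a one-liner: it requires a real extremal argument (iterated Kruskal--Katona, or a compression showing that the colex initial segment of a given size minimizes the count of low-layer members). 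You have identified this as the delicate point and gestured at two fixes, but have not carried either out; in particular, the ``concentrated near a point'' observation dispatches the literal star (there $\rank(A^r)=|A|$ trivially) but does not obviously cover all sets lacking a balanced coordinate. Until the downset inequality --- equivalently, the span lemma --- is actually proved, the argument is incomplete. The non-integrality of $b$ is, as you say, harmless bookkeeping: work with $\lfloor b\rfloor$ throughout.
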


Using \autoref{thm:random-linear-independent}, we apply \autoref{thm:decode-algo} to obtain the following corollary, which was stated informally as \autoref{thm:main:informal}.

\begin{corollary}
  Let $\epsilon>0$, and $r \le m$ be two positive integers.
Then there exists a (deterministic) algorithm that is able to correct $t = \left\lfloor \binom{m-\log(\binom{m}{\le r})-\log(1/\epsilon)}{\le r} \right \rfloor-1$ random errors in $RM(m, m-(2r+2))$ with probability at least $1-\epsilon$.
The algorithm runs in time $2^m \cdot \poly\inparen{\binom{m}{\le r}}$.
\end{corollary}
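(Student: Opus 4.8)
The plan is to obtain the corollary as a direct combination of \autoref{thm:decode-algo} with the random linear-independence bound of \autoref{thm:random-linear-independent}. First I would fix the coding-theoretic setup. By the duality property of Reed-Muller codes, $E(m,2r+1)$ is a parity-check matrix for $RM(m,m-(2r+2))$ (using that $E(m,r')$ is a parity-check matrix for $RM(m,m-r'-1)$ with $r'=2r+1$, and $m-(2r+1)-1=m-(2r+2)$). Hence if a codeword $c\in RM(m,m-(2r+2))$ is received as $y = c + \1_U$ with error set $U=\inbrace{\vecu_1,\ldots,\vecu_t}\subseteq\F_2^m$, then the syndrome $E(m,2r+1)\,y = E(m,2r+1)\,\1_U$ is exactly the $(2r+1)$-syndrome of $U$. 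Computing this syndrome from $y$ costs $\poly\inparen{2^m,\binom{m}{\le 2r+1}}$ time, and since $\binom{m}{\le 2r+1}\le \poly\inparen{\binom{m}{\le r}}$ (a standard binomial estimate; e.g.\ $\binom{m}{2i}\binom{2i}{i}=\binom{m}{i}\binom{m-i}{i}$ gives $\binom{m}{2i}\le\binom{m}{i}^2$, and the odd terms are bounded by $m$ times an even term), this step runs in $2^m\cdot\poly\inparen{\binom{m}{\le r}}$.

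Next I would note that a uniformly random error pattern of weight $t$ is the same object as a uniformly random $t$-element subset $U\subseteq\F_2^m$, because the coordinates of the code are indexed by $\F_2^m$. With $t=\floor{\binom{m-\log(\binom{m}{\le r})-\log(1/\epsilon)}{\le r}}-1 < \binom{m-\log(\binom{m}{\le r})-\log(1/\epsilon)}{\le r}$, \autoref{thm:random-linear-independent} applies verbatim to this $t$ and yields that, except with probability at most $\epsilon$ over the choice of $U$, the vectors $\inbrace{\vecu_1^{\,r},\ldots,\vecu_t^{\,r}}$ are linearly independent over $\F_2^{\binom{m}{\le r}}$.

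Conditioned on that event, \autoref{thm:decode-algo} guarantees that \autoref{alg:decoding}, run on the $(2r+1)$-syndrome of $U$, returns the set $U$ in time $2^m\cdot\poly\inparen{\binom{m}{\le r}}$; the decoder then outputs $y-\1_U$ as the corrected codeword. Summing the failure probability (at most $\epsilon$) and the running times (both $2^m\cdot\poly\inparen{\binom{m}{\le r}}$) completes the argument. There is essentially no obstacle here beyond bookkeeping: the only points needing a moment's care are verifying that the displayed value of $t$ lies in the exact range covered by \autoref{thm:random-linear-independent}, and confirming that the syndrome-computation step does not dominate the running time, which is where the estimate $\binom{m}{\le 2r+1}\le \poly\inparen{\binom{m}{\le r}}$ is used.
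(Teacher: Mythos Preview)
Your proposal is correct and follows exactly the approach the paper takes: the paper itself derives the corollary in one sentence by saying ``Using \autoref{thm:random-linear-independent}, we apply \autoref{thm:decode-algo},'' and your write-up simply fills in the routine bookkeeping (identifying the syndrome via the parity-check matrix $E(m,2r+1)$, checking that the displayed $t$ falls strictly below the threshold in \autoref{thm:random-linear-independent}, and verifying the syndrome-computation step fits in the stated time bound). There is no substantive difference in method.
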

\noindent If $r=o(\sqrt{m/\log m})$, the bound on $t$ is $(1-o(1))\binom{m}{\le r}$, as promised.

More generally, a positive answer to \autoref{Q:RM:BEC} is equivalent to $\inbrace{\vecu_1^r, \ldots, \vecu_t^r}$ for $t=(1-o(1))\binom{m}{\le r}$ being linearly independent with probability $1-o(1)$ (see Corollary 2.9 in \cite{AbbeSW15}), and thus we also obtain the following corollary, which was stated informally as \autoref{thm:main:conj:informal}.

\begin{corollary}
  Let $r \le m$ be two positive integers.
Suppose that $RM(m, m-r-1)$ achieves capacity for the BEC.
Then there exists a (deterministic) algorithm that is able to correct $(1-o(1))\binom{m}{\le r}$ random errors in $RM(m, m-(2r+2))$ with probability $1-o(1)$.
The algorithm runs in time $2^m \cdot \poly\inparen{\binom{m}{\le r}}$.
\end{corollary}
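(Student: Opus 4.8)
The plan is to obtain the statement as an immediate consequence of the already-established \autoref{thm:decode-algo} (the analysis of \autoref{alg:decoding}), combined with the combinatorial reformulation of BEC capacity recorded just above the statement (Corollary~2.9 of \cite{AbbeSW15}). First I would set up the decoder. Since $E(m,2r+1)$ is a parity-check matrix for $RM(m,m-(2r+2))$, if $c \in RM(m,m-(2r+2))$ is transmitted and $y = c + \1_U$ is received, where $U = \{\vecu_1,\dots,\vecu_t\} \subseteq \F_2^m$ is the random error set of size $t = (1-o(1))\binom{m}{\le r}$, then $E(m,2r+1)\cdot y = E(m,2r+1)\cdot\1_U$ is exactly the $(2r+1)$-syndrome of $U$. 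So the decoder is: (i) compute $s := E(m,2r+1)\cdot y$; (ii) run \autoref{alg:decoding} on $s$, obtaining a set $\mathcal E \subseteq \F_2^m$; (iii) output $y - \1_{\mathcal E}$.

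Next I would argue correctness. A positive answer to \autoref{Q:RM:BEC} for this $m,r$ --- i.e.\ that $RM(m,m-r-1)$ achieves capacity for the BEC --- is, by Corollary~2.9 of \cite{AbbeSW15}, equivalent to the assertion that for $t = (1-o(1))\binom{m}{\le r}$ and a uniformly random $t$-subset $U = \{\vecu_1,\dots,\vecu_t\}$ of $\F_2^m$, the columns $\{\vecu_1^r,\dots,\vecu_t^r\}$ of $E(m,r)$ are linearly independent with probability $1-o(1)$. Call this event $\mathcal A$; under the hypothesis it holds with probability $1-o(1)$. Conditioning on $\mathcal A$, \autoref{thm:decode-algo} says that \autoref{alg:decoding}, fed the $(2r+1)$-syndrome of $U$, returns $\mathcal E = U$ exactly; hence $y - \1_{\mathcal E} = y - \1_U = c$ and the decoder succeeds. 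Therefore it fails with probability at most $o(1)$. (If one prefers the $\mathrm{BSC}(t/2^m)$ formulation over a fixed number $t$ of random errors, this is bridged exactly as in the paper's earlier discussion, using Chernoff concentration of the number of flipped coordinates; note too that $\mathcal A$ is downward closed in $U$, so fluctuations of the error count below $t$ cause no trouble.)

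For the running time, \autoref{thm:decode-algo} already bounds step (ii) by $2^m \cdot \poly\inparen{\binom{m}{\le r}}$. Step (i) is a single $\F_2$ matrix-vector product of $E(m,2r+1) \in \F_2^{\binom{m}{\le 2r+1}\times 2^m}$ against $y$; using the routine estimate $\binom{m}{\le 2r+1} = \poly\inparen{\binom{m}{\le r}}$ (which follows from $\binom{m}{j} \le \binom{m}{\lceil j/2\rceil}\binom{m}{\lfloor j/2\rfloor}$ for $j \le 2r$ together with $\binom{m}{2r+1}\le m\binom{m}{\le r}^2$), this also costs $2^m\cdot\poly\inparen{\binom{m}{\le r}}$, and step (iii) is $O(2^m)$. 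The algorithm is deterministic --- the only randomness is that of the channel --- and, exactly as noted in the proof of \autoref{thm:decode-algo}, the $2^m$ linear systems are independent and each solvable by an $\mathsf{NC}^2$ circuit, so the whole procedure lies in $\mathsf{NC}$.

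I do not expect a real obstacle here: every ingredient --- the system of equations and its uniqueness analysis (\autoref{lem:decode}, \autoref{thm:decode-algo}) and the equivalence between BEC capacity and the linear-independence event --- is already in hand. The only things to be careful about are the bookkeeping of the hypothesis (one must invoke precisely Corollary~2.9 of \cite{AbbeSW15} to convert ``$RM(m,m-r-1)$ achieves BEC capacity'' into the event $\mathcal A$ that \autoref{thm:decode-algo} consumes) and keeping the degrees straight: the regime must have $m-(2r+2)\ge 0$ and $m-r-1\ge 0$ so that all codes involved are nontrivial and the erasures-to-errors reduction is non-vacuous, which is automatic wherever the hypothesis is meaningful.
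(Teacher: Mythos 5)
Your proposal is correct and follows the paper's own route exactly: the paper likewise derives this corollary by invoking the equivalence (Corollary 2.9 of \cite{AbbeSW15}) between $RM(m,m-r-1)$ achieving BEC capacity and the linear independence, with probability $1-o(1)$, of $\{\vecu_1^r,\ldots,\vecu_t^r\}$ for a random $t$-set with $t=(1-o(1))\binom{m}{\le r}$, and then applying \autoref{thm:decode-algo}. Your additional bookkeeping (syndrome computation via the parity-check matrix $E(m,2r+1)$, the bound $\binom{m}{\le 2r+1}=\poly\inparen{\binom{m}{\le r}}$, and the BSC-to-fixed-error-count bridge) is consistent with what the paper leaves implicit.
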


We note that for all values of $r$, $2^m \cdot \poly\inparen{\binom{m}{\le r}}$ is polynomial in the block length $n=2^m$, and when $r=o(m)$ this is equal to $n^{1+o(1)}$.

\section{Abstractions and Generalizations}\label{sec:abstraction}

\subsection{An abstract view of the decoding algorithm}

In this section we present a more abstract view of \autoref{alg:decoding}, in the spirit of the works by Pellikaan, Duursma and K\"{o}tter (\cite{Pellikaan92, DuursmaK94}) which abstract the Berlekamp-Welch algorithm (see also the exposition in \cite{SudanLectureNotes}).
Stated in this way, it is also clear that the algorithm works also over larger alphabets, so we no longer limit ourselves to dealing with binary alphabets.
As shown in \cite{KumarPfister15}, Reed-Muller codes over $\F_q$ (sometimes referred to as {\em Generalized Reed-Muller codes)} also achieve capacity in the constant rate regime.

We begin by giving the definition of a (pointwise) product of two vectors, and of two codes.

\begin{defin}
\label{def:star}
Let $\vecu,\vecv \in \F_q^n$.
Denote by $\vecu * \vecv \in \F_q^n$ the vector $(\vecu_1\vecv_1, \ldots, \vecu_n \vecv_n)$.
For $A,B \subseteq \F_q^n$ we similarly define $A*B = \setdef{\vecu * \vecv}{\vecu \in A, \vecv \in B}$.
\end{defin}

Following the footsteps of \autoref{alg:decoding}, we wish to decode, in a code $C$, error patterns which are correctable from erasures in a related code $N$, through the use of an {\em error-locating code} $E$.
Under some assumptions on $C, N$ and $E$, we can use a similar proof in order to do this.

\begin{theorem}
\label{thm:abstraction}
Let $E, C, N \subseteq \F_q^n$ be codes with the following properties.
\begin{enumerate}
\item \label{item:inclusion} $E*C \subseteq N$
\item \label{item:correction} For any pattern $\1_{U}$ that is
  correctable from erasures in $N$, and for any coordinate $i \not \in U$ there exists a codeword $\vece \in E$ such that $\vece_j = 0$ for all $j \in U$ and $\vece_i=1$.
\end{enumerate}
Then there exists an efficient algorithm that corrects in $C$ any pattern $\1_U$, which is correctable from erasures in $N$.\end{theorem}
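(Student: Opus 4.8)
The plan is to mimic \autoref{alg:decoding} and \autoref{lem:decode} in the abstract setting. Fix an error pattern $\1_U$ that is correctable from erasures in $N$, and suppose we are given the syndrome of $\1_U$ with respect to $C$ (equivalently, the received word $y = x + \1_U$ for an unknown codeword $x \in C$). I would iterate over all coordinates $v \in [n]$, guessing that $v$ is an error location, and for each guess set up a linear system over $\F_q$ in the coordinates of an unknown codeword $e \in E$ with the property that a solution exists if and only if $v \in U$; then $\mathcal{E} = \setdef{v}{\text{the system for } v \text{ is solvable}}$ recovers $U$ exactly, after which we subtract and decode. The role played in \autoref{lem:decode} by ``degree $\le r$'' is now played by ``$e \in E$'', and the role of ``multiply by $M$ and by $M(x_\ell + v_\ell + 1)$, then sum over the $\vecu_i$'' is played by ``take the coordinate-wise product $e * c$ for $c$ ranging over (a spanning set of) $C$, then apply the parity-check/syndrome map of $N$.'' Condition~\ref{item:inclusion} ($E * C \subseteq N$) guarantees that each such product lands in $N$, so its $N$-syndrome is determined by the $N$-syndrome of $\1_U$, which in turn is determined from the $C$-syndrome of $\1_U$; this makes every constraint a \emph{linear} constraint in the coordinates of $e$, and the whole system is constructible and solvable in $\poly(n)$ time.

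More concretely, I expect the system for a guess $v$ to assert: (i) $e_v = 1$ (analogue of \autoref{item:sum1}); and (ii) a family of equations saying that the vector $w_e := e * \1_U \in \F_q^n$ — which is supported inside $U$ since $e * \1_U$ is zero off $U$ — ``looks like'' $\1_{\{v\}}$ after pairing against $C$, i.e.\ $\langle c, w_e \rangle = c_v$ for all $c \in C$, which one phrases via the $N$-syndrome map applied to $e * c$ and is where $E * C \subseteq N$ is used so the right-hand side is computable from data. The soundness direction is the heart of the argument and parallels the subclaims in the proof of \autoref{lem:decode}: the equations force $w_e$, viewed through $C$, to be consistent with the indicator of a single coordinate $v$; since $\1_U$ is correctable from erasures in $N$ — equivalently the restriction-to-$U$ of the relevant dual/parity structure is injective, so there is a \emph{unique} way to express the relevant vector as a combination supported on $U$ — we conclude $w_e = \1_{\{v\}}$ as a vector supported on $U$, which forces $v \in U$. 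For completeness (the ``only if'' / existence direction) I would invoke Condition~\ref{item:correction}: if $v \in U$ pick any $i \notin U$\,\dots\ — more precisely, for $v \in U$ we need a witness $e \in E$ with $e_v = 1$ and $e_j = 0$ for $j \in U \setminus \{v\}$; this follows from Condition~\ref{item:correction} applied to the erasure pattern $\1_{U \setminus \{v\}}$ (still correctable in $N$, being a sub-pattern) with distinguished coordinate $v$.

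The main obstacle I anticipate is formulating Condition~\ref{item:correction} usage cleanly and matching it to exactly the right notion of ``correctable from erasures in $N$.'' In the concrete Reed-Muller proof, ``$U^r$ has linearly independent columns'' was used twice — once to get the dual polynomials $f_i$ (existence of the witness $e$), and once for uniqueness of the linear combination (soundness). In the abstract statement, Condition~\ref{item:correction} is tailored to the existence half; for the uniqueness/soundness half I need the basic fact that an erasure pattern $\1_U$ is correctable in a linear code $N$ iff the columns of a parity-check matrix of $N$ indexed by $U$ are linearly independent, and then run the same ``unique combination supported on $U$'' argument. I would state this equivalence as a preliminary remark (it is standard) and then the proof goes through essentially verbatim. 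A secondary point to handle with care is bookkeeping over $\F_q$ rather than $\F_2$: the coordinate-wise product, the indicator vectors, and the syndrome computations all make sense over any field, but I should make sure no step secretly used $-1 = 1$; I do not expect any genuine difficulty there. Finally, I would remark that the running time is dominated by $n$ linear-system solves, each of size $\poly(n)$, hence $\poly(n)$ overall, and that the decoder is deterministic given the syndrome.
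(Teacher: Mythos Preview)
Your plan has a genuine gap in the soundness direction. You are trying to port \autoref{lem:decode} directly, but that lemma relies on \emph{three} families of constraints, and your system captures only the analog of item~\ref{item:2r}. (Two smaller issues along the way: the pairing in your (ii) should be against $N^\perp$, not $C$, since what is actually computable from $\vecy$ is $\langle m, f*\vecy\rangle = \langle m, f*\vece\rangle$ for $m \in N^\perp$, using $f*\vecc \in N$; and your phrase ``the $N$-syndrome map applied to $e*c$'' yields zero identically because $e*c \in N$ by hypothesis.) The crucial item~\ref{item:2r+1} uses the point-separating multipliers $(x_\ell + v_\ell + 1)$, which have no analog in the abstract setting with only $E,C,N$. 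Without them, false positives occur already in the Reed--Muller instantiation: take $\vecv \notin U$ with $\vecv^r$ in the span of $\{\vecu_j^r : j\in[t]\}$, say $\vecv^r = \sum_{j\in S}\vecu_j^r$ (this is the generic situation when $|U|$ is near the erasure capacity of $N$; note $|S|$ is necessarily odd since the constant monomial forces $1 = |S|\bmod 2$). Then $\1_{\{\vecv\}} + \sum_{j\in S}\1_{\{\vecu_j\}} \in N$, and since the evaluation-on-$U$ map from $N^\perp$ is surjective there exists a degree-$r$ polynomial $f$ with $f|_U = \1_S$; automatically $f(\vecv) = \sum_{j\in S} f(\vecu_j) = 1$, so both of your constraints are satisfied even though $\vecv \notin U$. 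In \autoref{lem:decode} this is ruled out precisely by item~\ref{item:2r+1}, which forces $J = J_1 = \cdots = J_m$ and hence pins $J$ down to the single index of $\vecv$.

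The paper's proof avoids the per-guess iteration altogether and is much shorter. One solves a \emph{single} linear system $\veca * \vecy = \vecb$ over $\veca \in E$, $\vecb \in N$, and shows that the common zero set of all solutions $\veca$ is exactly $U$. Soundness is one line: if $\veca * \vecy \in N$ then $\veca * \vece = \veca*\vecy - \veca*\vecc \in N$ (using $E*C\subseteq N$) is supported on $U$, hence zero, so every solution vanishes on $U$. Completeness uses Condition~\ref{item:correction} in the direction opposite to yours: for each $i \notin U$ it supplies $\veca \in E$ with $\veca_i=1$ and $\veca|_U=0$, and then $\veca*\vecy = \veca*\vecc \in N$. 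A separate short lemma (again via Condition~\ref{item:correction}) shows that $U$ is also erasure-correctable in $C$, so the final correct-from-erasures step goes through.
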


To put things in perspective, earlier we set $C=RM(m, m-2r-2)$, $N=RM(m, m-r-1)$ and $E=RM(m,r+1)$.
It is immediate to observe that \autoref{item:inclusion} holds in this case, and \autoref{item:correction} is guaranteed by \autoref{lem:dual-poly}: Indeed, consider the error pattern $U=\{\vecu_1, \ldots, \vecu_t\}$ and the dual polynomials $\{f_i\}_{i=1}^t$, and let $\vecv \not\in U$ be any other coordinate of the code.
If there exists $j \in [t]$ such that $f_j (\vecv)=1$, we can pick the codeword $g=f_j \cdot (1+x_\ell + \vecv_\ell)$, where $\ell$ is some coordinate such that $\vecv_\ell \neq (\vecu_j)_\ell$.
$g$ has degree at most $r+1$ and so it is a codeword in $E$, and it can be directly verified that it satisfies the conditions of \autoref{item:correction}.
If $f_j(\vecv)=0$ for all $j$, we can pick $g=1-\sum_{i=1}^t f_i$.

It is also worth pointing out the differences between our approach and the abstract Berlekamp-Welch decoder of Duursma and K\"{o}tter: They similarly set up codes $E, C$ and $N$ such that $E*C \subseteq N$.
However, instead of \autoref{item:correction}, they require that for any $\vece \in E$ and $\vecc \in C$, if $\vece * \vecc = 0$ then $\vece=0$ or $\vecc=0$ (or similar requirements regarding the distances of $E$ and $C$ that guarantee this property).
This property, as well as the distance properties, do not hold in the case of Reed-Muller codes.

Turning back to the proof of \autoref{thm:abstraction}, the algorithm and the proof of correctness turn out to be very short to describe in this level of generality.
Given a word $\vecy \in \F_q^n$, the algorithm would solve the the linear system $\veca * \vecy = \vecb$, in unknowns $\veca \in E$ and $\vecb \in N$.
Under the hypothesis of the theorem, we show that common zeros of the possible solutions for $\veca$ determine exactly the error locations.
Once the locations of the errors are identified, correcting them is easy: we can replace the error locations by the symbol '?'
and use an algorithm which corrects erasures (this can always be done efficiently, when unique decoding is possible, as this merely amounts to solving a system of linear equations).
The algorithm is given in \autoref{alg:abstract}.

\begin{algorithm}
  \caption{: Abstract Decoding Algorithm}
  \label{alg:abstract}
\begin{algorithmic}[1]
  \Require{received word $\vecy \in \F_q^n$ such that $\vecy = \vecc + \vece$, with $\vecc \in C$ and $\vece$ is supported on a set $U$}
  \State{Solve for $\veca \in E, \vecb \in N$, the linear system $\veca * \vecy = \vecb$.}
  \State{Let $\inbrace{\veca_1, \ldots, \veca_k}$ be a basis for the solution space of $\veca$, and let $\mathcal{E}$ denote the common zeros of $\setdef{\veca_i}{i \in [k]}$.}
  \State{For every $j \in \mathcal{E}$, replace $\vecy_j$ with '?', to get a new word $\vecy'$.}
  \State{Correct $\vecy'$ from erasures in $C$.}
\end{algorithmic}
\end{algorithm}

Note that in \autoref{thm:abstraction} we assume that the error pattern $U$ is correctable from erasures in $N$, whereas \autoref{alg:abstract} first computes a set of error locations $\mathcal{E}$ and then corrects $\vecy'$ from erasures in $C$.
Thus, the proof of \autoref{thm:abstraction} can be divided into two steps.
The first, and the main one, will be to show that $\mathcal{E}=U$.
The second, which is merely an immediate observation, will be to show that $U$ is also correctable from erasures in $C$.
We begin with the second part:

\begin{lemma}
\label{lem:correct-from-erasures}
Assume the setup of \autoref{thm:abstraction}, and let $U$ be any pattern which is correctable from erasures in $N$.
Then $U$ is also correctable from erasures in $C$.
\end{lemma}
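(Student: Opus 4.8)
The plan is to show that any erasure pattern correctable in the larger code $N$ is automatically correctable in the subcode $C$, which is essentially the statement that correctability of erasures is monotone with respect to code inclusion. First I would recall the linear-algebraic characterization of erasure correction: an erasure pattern $U$ is correctable in a linear code $D$ if and only if there is no nonzero codeword of $D$ supported entirely inside $U$; equivalently, the restriction (``puncturing'') map that sends a codeword of $D$ to its coordinates outside $U$ is injective on $D$. This is standard, but I would state it explicitly since the rest of the argument is a one-line deduction from it.

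Next I would observe that we must actually check that $C \subseteq N$ in the setup of \autoref{thm:abstraction}. This follows from \autoref{item:inclusion} together with the fact that $E$ contains the all-ones vector (or, more precisely, that $1 \in E$, which holds in the RM instantiation since $E = RM(m,r+1)$ contains constants; in the abstract setting this should be added as a hypothesis or derived from \autoref{item:correction} applied to the empty erasure pattern, which yields a codeword of $E$ equal to $1$ in some coordinate — but to get $C \subseteq N$ cleanly one really wants $1 \in E$). Assuming $C \subseteq N$, the deduction is immediate: if $\vecc \in C$ is a nonzero codeword supported inside $U$, then $\vecc \in N$ is a nonzero codeword of $N$ supported inside $U$, contradicting the assumed correctability of $U$ in $N$. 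Hence no such $\vecc$ exists and $U$ is correctable in $C$.

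I would then remark, as the excerpt already notes, that ``correctable from erasures'' comes with an efficient decoder: once the erased coordinates are known, recovering the codeword of $C$ is just solving the linear system imposed by the generator (or parity-check) matrix of $C$ restricted to the unerased coordinates, which is solvable in polynomial time precisely because the restriction map is injective.

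The only genuine subtlety — and the step I expect to require the most care — is the implicit hypothesis $1 \in E$ (or some substitute) needed to conclude $C \subseteq N$ from $E * C \subseteq N$. Without it, $E * C \subseteq N$ does not force $C \subseteq N$, and the lemma as stated would need $C \subseteq N$ as an explicit assumption. In the concrete Reed-Muller setting this is a non-issue, and in the abstract setting it is natural to assume $E$ contains the all-ones word; so I would simply note this and proceed. Everything else is routine linear algebra over $\F_q$.
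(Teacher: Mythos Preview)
Your approach via the inclusion $C \subseteq N$ is natural, and you correctly identify that it requires the extra hypothesis $1 \in E$, which is not part of the abstract setup of \autoref{thm:abstraction}. You then leave this as an unresolved assumption (``simply note this and proceed''), which makes the argument incomplete in the stated generality.

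The paper's proof avoids this assumption entirely. Rather than trying to embed $C$ in $N$, it takes a nonzero $\vecc \in C$ supported on $U$, picks a coordinate $i \in U$ with $\vecc_i \neq 0$, and then applies \autoref{item:correction} to the \emph{smaller} pattern $U \setminus \{i\}$ (which is still correctable from erasures in $N$) to obtain $\veca \in E$ with $\veca_i = 1$. Then $\veca * \vecc$ is a nonzero element of $E*C \subseteq N$ supported inside $U$, giving the desired contradiction. This uses only the two stated hypotheses of \autoref{thm:abstraction} and in particular does not need $1 \in E$ or $C \subseteq N$.

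So the missing idea is: instead of multiplying $\vecc$ by the all-ones vector, multiply it by a codeword of $E$ that \autoref{item:correction} guarantees to be nonzero at a chosen support point of $\vecc$. You were close --- your parenthetical about applying \autoref{item:correction} to the empty pattern is in the right spirit, but applying it to $U \setminus \{i\}$ is what makes the argument go through without additional hypotheses.
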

\begin{proof}
  We may assume that $U \neq \emptyset$, as otherwise the statement is trivial.
Suppose on the contrary that $U$ is not correctable from erasures in $C$, that is, there exists a non-zero codeword $\vecc \in C$ supported on $U$.
For any $\veca \in E$, we have that $\veca * \vecc$ is a codeword of $N$ which is supported on a subset of $U$.
In order to reach a contradiction, we want to pick $\veca \in E$ so that $\veca * \vecc$ is a non-zero codeword of $N$, which contradicts the assumption that $U$ is correctable from erasures in $N$.

Pick $i \in U$ so that $\vecc_i \neq 0$.
Observe that if $U$ is correctable from erasures in $N$ then so is $U \setminus \inbrace{i}$.
By \autoref{item:correction} in \autoref{thm:abstraction} with respect to the set $U \setminus \inbrace{i}$ there exists $\veca \in E$ with $\veca_i=1$.
Thus, in particular $\veca * \vecc$ is non-zero.
\end{proof}

We now prove that main part of \autoref{thm:abstraction}, that is, that under the assumptions stated in the theorem, \autoref{alg:abstract} correctly decodes (in $C$) any error pattern that is correctable from erasures in $N$.

\begin{proof}[Proof of \autoref{thm:abstraction}]
  Write $\vecy = \vecc + \vece$, so that $\vecc \in C$ is the transmitted codeword and $\vece$ is supported on the set of error locations $U$.
As noted above, by \autoref{lem:correct-from-erasures} it is enough to show that under the assumptions of the theorem (in particular, that $U$ is correctable from erasures in $N$), the set of error locations $\mathcal{E}$ computed by \autoref{alg:abstract} equals $U$.

In the following two lemmas, we argue that any solution $\veca$ for the system vanishes on the error points, and then that for every other index $i$, there exists a solution whose $i$-th entry is non-zero (and so there must be a basis element for the solution space whose $i$-th entry is non-zero).

The following lemma states that every solution $\veca \in E$ to the equation $\veca * \vecy = \vecb$ vanishes on $U$, the support of $\vece$.
In the pointwise product notation, this is equivalent to showing that $\veca * \vece = 0$.

\begin{addmargin}[2em]{5em}
  \begin{subclaim}
    For every $\veca \in E, \vecb \in N$ such that $\veca * \vecy = \vecb$, it holds that $\veca * \vece = 0$.
  \end{subclaim}
  \begin{myproof}{Subclaim}
    Since $\veca * \vecy = \vecb \in N$ (by the assumption) and $\veca * \vecc \in N$ (by \autoref{item:inclusion}), we get that $\veca*\vece = \veca * \vecy - \veca * \vecc$ is also a codeword in $N$.
Furthermore, $\veca*\vece$ is also supported on $U$, and since $U$ is an erasure-correctable pattern in $N$, the only codeword that is supported on $U$ is the zero codeword.
  \end{myproof}
\end{addmargin}

\noindent To finish the proof, we show that for any $i \not\in U$,
there is a solution $\veca$ to the system of linear equations with $\veca_i = 1$.

\begin{addmargin}[2em]{5em}
  \begin{subclaim}
    For every $i \not \in U$ there exists $\veca \in E, \vecb \in N$ such that $\veca$ is 0 on $U$, $\veca_i=1$ and $\veca*\vecy=\vecb$.
  \end{subclaim}
  \begin{myproof}{Subclaim}
    By \autoref{item:correction}, since $U$ is correctable from erasures in $N$, for every $i \not \in U$ we can pick $\veca \in E$ such that $\veca$ is 0 on $U$ and $\veca_i = 1$.
Set $\vecb = \veca * \vecy$.
It remains to be shown that $\vecb$ is a codeword of $N$.
This follows from the fact that
	\[
	\vecb = \veca * \vecc + \veca * \vece = \veca * \vecc,
	\]
	where the second equality follows from the fact that $\veca$ is zero on $U$ (the support of $\vece$). Finally, $\veca * \vecc$ is a codeword of $N$ by \autoref{item:inclusion}.
  \end{myproof}
\end{addmargin}
These two claims complete the proof of the theorem.
\end{proof}

\subsection{Decoding of Linear Codes over $\F_2$}
\label{sec:general}

In \cite{AbbeSW15}, it is observed that their results for Reed-Muller codes imply that for {\em every} linear code $N$, every pattern which is correctable from erasures in $N$ is correctable from errors in what they call the ``degree-three tensoring'' of $N$.
One can in fact use our \autoref{alg:decoding} almost verbatim to obtain an efficient version of this statement.
However, here we remark that this is nothing but a special case of \autoref{thm:abstraction} with an appropriate setting of the codes $E,C,N$.
We begin by briefly describing their definitions and their argument.

The basic tool used by \cite{AbbeSW15} is embedding any parity check matrix in the matrix $E(m,1)$ for an appropriate choice of $m$.
Let $N$ be any linear code of dimension $k$ over $\F_2$ and $H$ be its parity check matrix.
For convenience, we first extend $N$ by adding a parity bit.
This increases the block length by 1, does not decrease the distance and preserves the dimension.
A parity check matrix for the extended code can by obtained from $H$ by constructing the matrix

\[
H_0=
\left(
\begin{array}{cc}
  1 & 1 \cdots 1 \\ 
  0 & \raisebox{-15pt}{{\huge\mbox{{$H$}}}} \\[-3ex]
  \vdots & \\
  0 &
\end{array}
\right).
\]

The main observation now is that $E(m,1)$ is an $(m+1)\times 2^m$ matrix that contains all vectors of the form $(1,\vecv)$ for $\vecv \in \F_2^m$, so if we set $m=n-k$ to be the number of rows of $H$, we can pick a subset $S$ of the columns of $E(m,1)$ that correspond to the columns that appear in $H_0$.

\cite{AbbeSW15} then define the degree-three tensoring of $N$, which is a code $C$ whose parity check matrix is $H_0^{\otimes 3}$: this is an $\binom{m}{\le 3} \times n$ matrix with rows indexed by tuples $i_1 < i_2 < i_3$, with the corresponding row being the pointwise product (as in \autoref{def:star}) of rows $i_1,i_2,i_3$ of $H_0$.
One can then verify that \autoref{alg:decoding} can be used in order to correct (in $C$) any error pattern which is correctable from erasures in $N$, by using the algorithm with $r=1$ and having the error location guesses run only over the columns in $S$.

A closer look reveals that this construction is in fact a special case of \autoref{thm:abstraction}.
Given any linear binary code $N$ with parity check matrix $H$, the main observation of \cite{AbbeSW15} can be interpreted as saying that when we add a parity bit to $N$, we can embed $N$ in a puncturing of $RM(m, m-2)$ (whose parity check matrix is $E(m,1)$).
We state it in the following claim:

\begin{claim}
\label{cl:embed}
Let $N'$ denote the subcode of $RM(m, m-2)$ of all words that are 0 outside $S$. Then $N$ is precisely the restriction of $N'$ to the $S$ coordinates.
\end{claim}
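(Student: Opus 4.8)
The plan is to recognize \autoref{cl:embed} as a bookkeeping identity between two descriptions of one and the same linear space. Recall (from the duality property of Reed-Muller codes invoked above) that $E(m,1)$ is a parity-check matrix of $RM(m,m-2)$, so $RM(m,m-2)=\ker E(m,1)$ inside $\F_2^{2^m}$, and that the column of $E(m,1)$ indexed by a point $\vecu=(u_1,\dots,u_m)\in\F_2^m$ is the vector $\vecu^1=(1,u_1,\dots,u_m)$. The first thing I would do is pin down the identification of $S$ with the columns of $H_0$: label the $m+1$ rows of $H_0$ by the monomials $1,x_1,\dots,x_m$ of $\Monset(m,1)$, with the top (all-ones) row corresponding to the monomial $1$; then each column of $H_0$, read as an element of $\F_2^{m+1}$, is exactly $\vecu^1$ for the point $\vecu\in\F_2^m$ recorded in rows $x_1,\dots,x_m$ of that column. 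Concretely, the leftmost column $(1,0,\dots,0)$ of $H_0$ corresponds to $\vecu=\mathbf 0$, and each of the other $n$ columns to the matching column of $H$. These points are distinct exactly when the columns of $H_0$ are, which I will assume throughout (repeated coordinates affect neither the code nor the decoding task), so that $S$ is this set of $n+1$ points; $S$ is then in natural bijection with the coordinates of the parity-extended code, which is what is denoted $N$ in this section.

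With the identification fixed, the core of the argument is a one-line computation. Take $\vecc\in\F_2^{2^m}$ supported on $S$, and let $w$ denote its restriction to the $S$-coordinates (ordered to match the columns of $H_0$). Because $\vecc$ vanishes off $S$, the matrix-vector product collapses to $E(m,1)\vecc=\sum_{\vecu\in S}\vecc_\vecu\,\vecu^1$; under the identification above $\{\vecu^1:\vecu\in S\}$ is precisely the set of columns of $H_0$, so this equals $H_0w$. Hence such a $\vecc$ lies in $N'$ (equivalently, $E(m,1)\vecc=0$) if and only if $H_0w=0$, i.e.\ if and only if $w$ is a codeword of the parity-extended code $N$. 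Since every element of $N'$ is supported on $S$ and is determined by its restriction there, and since, running the computation backwards, every codeword of $N$ lifts by zeros to an element of $N'$, the restriction map is a linear isomorphism $N'\to N$; that is, the restriction of $N'$ to the $S$-coordinates is exactly $N$, as claimed. (If one prefers the original, non-extended code, one punctures $N$ at the coordinate $\vecu=\mathbf 0$, matching the discussion preceding the claim.)

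I do not expect a real obstacle here: once the rows and columns of $H_0$ are embedded correctly into the rows $1,x_1,\dots,x_m$ and the columns $\{\vecu^1:\vecu\in\F_2^m\}$ of $E(m,1)$, the claim is immediate. The only points that require any care are (i) matching the all-ones top row of $H_0$ with the monomial $1$ and the column $(1,\mathbf 0)$ with the point $\vecu=\mathbf 0$, so that no coordinate is dropped or duplicated in passing between $H_0$ and $E(m,1)$; and (ii) the harmless standing assumption that $H_0$ has distinct columns, while keeping in mind that the ``$N$'' in the claim is the parity-extended code, which has $|S|=n+1$ coordinates rather than $n$.
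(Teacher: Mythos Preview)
Your proposal is correct and follows essentially the same approach as the paper: both arguments reduce to the observation that for a vector supported on $S$, the parity check $E(m,1)\vecc$ collapses to $H_0 w$ (where $w$ is the restriction to $S$), so membership in $N'$ is equivalent to membership in $N$. Your write-up is more careful about the bookkeeping (the row/column identifications, the distinctness of columns, and the fact that ``$N$'' here means the parity-extended code), but the mathematical content is identical to the paper's proof.
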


\begin{proof}
  Let $\vecb \in N$.
Then $H_0 \vecb=0$, i.e.\ the columns of $H_0$ indexed by the non-zero elements in $\vecb$ add up to 0.
Let $\vecb' \in \F_2^{2^m}$ denote that extension of $\vecb$ into a vector of length $2^m$ obtained by filling 0's in every coordinate not in $S$.
Then $E(m,1)\vecb'=0$, since the same columns that appeared in $H_0$ appear in $E(m,1)$.
This implies that $\vecb' \in N'$.

Similarly, for every $\vecb' \in N'$, we can define $\vecb$ to be its restriction to $S$, and then $H_0\vecb=0$, i.e.\ $\vecb \in N$.
\end{proof}

The degree-three tensoring of $N$, which we denote by $C$, can then be similarly embedded in a puncturing of $RM(m, m-4)$, where again, only the coordinates in $S$ remain, and similarly $C$ can be seen to be the restriction to $S$ to the subcode $C'$ of $RM(m, m-4)$ that contains the words that are 0 outside $S$.

Finally, we define the error locating code $E$ to be the restriction of $RM(m,2)$ to the coordinates of $S$.

We now show that the conditions of \autoref{thm:abstraction} are satisfied in this case.
We begin with \autoref{item:correction}.
If $U$ is a correctable pattern in $N$, it means that the columns indexed by $U$ in $H_0$ are linearly independent.
It follows that they are also linearly independent as columns in $E(m,1)$.
Hence, using the same arguments as before we can find, for any coordinate $\vecv \not \in U$, a degree 2 polynomial $g$ such that $g(\vecv)=1$ and $g$ restricted to $U$ is 0.
Restricting the evaluations of $g$ to the subset of coordinates $S$, we get a codeword $\vece \in E$ with the required property.

As for \autoref{item:inclusion}: We first argue that $RM(m,2)*C' \subseteq N'$, since the degrees match and the property of vanishing outside $S$ is preserved under multiplication.
Projecting back to the coordinates in $S$, we get that $E*C \subseteq N$.

\section*{Acknowledgement}
We would like thank Avi Wigderson, Emmanuel Abbe and Ilya Dumer for helpful discussions and for commenting on an earlier version of the paper. We thank Venkatesan Guruswami and anonymous reviewers for pointing out the abstraction of \autoref{alg:decoding} given in \autoref{sec:abstraction}.

\bibliographystyle{customurlbst/alphaurlpp}
\bibliography{RMcode}

\appendix

\end{document}